\newcommand{\tnm}[1]{{\textbf{\texttt{#1}}}} 
\newcommand{\reals}{\mathbb{R}}
\newcommand{\nS}{\mathcal{S}}
\newcommand{\RS}{\texttt{RS}}
\begin{document}
\title{Reachability of Linear Uncertain Systems: Sampling Based Approaches}
%
%

\author{Bineet Ghosh\inst{1} \and
Parasara Sridhar Duggirala\inst{1}}

\institute{The University of North Carolina at Chapel Hill
\email{\{bineet,psd\}@cs.unc.edu}}

%
%

\maketitle              
\begin{abstract}
In this work, we perform safety analysis of linear dynamical systems with uncertainties.
Instead of computing a conservative overapproximation of the reachable set, our approach involves computing a \emph{statistical approximate reachable set}.
As a result, the guarantees provided by our method are probabilistic in nature. 
In this paper, we provide two different techniques to compute statistical approximate reachable set.
We have implemented our algorithms in a python based prototype and demonstrate the applicability of our approaches on various case studies. We also provide an empirical comparison between the two proposed methods and with \texttt{Flow*}.

\keywords{Linear Uncertain Systems, Formal Methods, Statistical Verification, Robustness, Safety Verification, Reachable Sets}

\end{abstract}

\section{Introduction}
\label{sec:introduction}
Formal analysis of Cyber-Physical Systems (CPS) deployed in safety critical scenarios can provide rigorous safety assurances.
Such formal analysis requires a mathematically precise model of the system behavior.
While safety verification of CPS by performing reachable set computation with precise models has been widely studied~\cite{frehse2005phaver,girard2005reachability,frehse2011spaceex,bak2017hylaa}, these techniques are fragile with respect to model uncertainties.
That is, any error in the model description would invalidate all the safety guarantees obtained from reachable set computation.
Such model errors could be either because of implicit parameters in the model, sensor and measurement error, or unaccounted factors.
In this paper, we study two new techniques for performing rigorous analysis of system with modeling uncertainties such that we can make the safety analysis robust to model uncertainties.
We restrict our attention to CPS that can be modelled as linear dynamical systems.

The trajectories of linear dynamical systems can be represented in closed form using matrix exponential.
The effect of a model uncertainty on a trajectory, therefore, is a complex nonlinear function.
Given the limited scalability of nonlinear constraint solvers, for performing rigorous safety analysis, one is either required to compute an overapproximation of the reachable set either using representations or from sampled dynamics from the uncertainty. 
Such approaches have been proposed in~\cite{10.1145/1967701.1967717,DBLP:conf/emsoft/LalP15,ghosh2019robust}.
%
%
However, either the overapproximation of the reachable set is too conservative or the time taken for computing the reachable set to a required degree of precision is too expensive.

In this paper, we mitigate these two drawbacks by computing reachable sets artifacts that provide statistical guarantees.
These artifacts facilitate the user to trade-off between the computational cost and the statistical guarantees desired.
Hence, the user can compute reachable set according to a desired statistical confidence --- adjusting the computational cost --- based on the application scenario.
%
%
Additionally, the reachable sets computed using our approach are free from \textit{wrapping effect}, as the computed reachable set at a time step is independent of the previous steps' computations. 

This paper presents two techniques for computing such reachable sets.
The first technique uses a sequential hypothesis testing framework to verify that the candidate reachable set provides the desired statistical guarantees.
This framework is similar in nature to widely applied statistical model checking techniques.
Our second technique uses model learning with probably approximately correct guarantees.
In this technique, the model that approximates the reachable set of linear systems with uncertainties is learned by solving an optimization problem.
The statistical guarantees of the model are a result of the formulation of the optimization problem.
We compare the performance and accuracy of both these techniques to highlight the trade-off between computational effort and statistical guarantees using different techniques.

We have implemented our techniques in a python prototype tool and demonstrate the applicability of our approach on various standard benchmarks. We provide a empirical comparison between the two proposed approaches and also compare it with \texttt{Flow*} \cite{7809839}. Our evaluation on various benchmarks shows that artifacts that give reasonable guarantees can be computed very efficiently.

This paper is organized as follows.
In section \ref{sec:prelims} we provide the notations and the formal definitions that has been used in the rest of the paper.
In section \ref{sec:appx_props} we present our first method to compute reachable set of linear uncertain systems using hypothesis testing.
In section \ref{sec:pac} we present our second method to compute reachable set of linear uncertain systems using model learning.
In section \ref{sec:eval} we present the evaluation of our algorithm on several benchmarks, and provide a comparison between the two methods and also with \texttt{Flow*}.

\section{Related Work}
\label{sec:relatedWork}
The present work draws inspiration from previous works computing reachable sets with uncertainties, such as the following works: (i) the system is discretized, and the effect of uncertainties are computed separately \cite{10.1145/1967701.1967717}; (ii) computes flowpipes using sampling based method \cite{DBLP:conf/emsoft/LalP15}; (iii) computes exact reachable set for a subset of uncertainties \cite{ghosh2019robust}.
Unlike these works, this paper proposes two statistical approaches to efficiently compute reachable set of linear uncertain systems. And therefore provides probabilistic guarantees on the reachable set. Statistical verification has been widely used, some of such are: \cite{10.5555/647771.760735,10.1007/11513988_26,10.1007/978-3-642-03845-7_15,10.1007/978-3-642-24372-1_1}. More works on statistical verification can be found in \cite{10.1007/978-3-642-16612-9_11}.

Sampling based approaches, similar to our first approach, has been used to learn \textit{discrepancy function}\cite{6658604} in \cite{fan2017dryvrdatadriven}. Such sampling based approaches has also been used to verify hyper-properties of systems in \cite{wang2019statistical}. In a recent work \cite{10.1145/3365365.3382209}, \emph{Statistical Model Checking} based on \emph{Clopper-Pearson confidence levels} \cite{10.1145/3049797.3049804} has also been used to verify samples specifications of a Neural Network based controller, that are captured by \emph{Signal Temporal Logic (STL)} formulas. The closest work to our first approach, that uses \emph{Jeffries Bayes Factor} test, is \cite{7945001}.

Our second approach is based on learning a simpler model, that provides probabilistic guarantees, from finite number of samples. Such \emph{Probabilistic Approximately Correct (PAC)} models has been used in several works \cite{park2020pac,ashok2019pac,chen2015pac}. For our second approach, we use \emph{scenario optimization} \cite{1632303} to learn a PAC model. Such scenario optimization based techniques has also been used to find safe inputs for black box systems \cite{8882768}. The closest work to our second approach, that uses scenario optimization, is \cite{xue2020pac}.

Linear Uncertain Systems can also be modelled as a non-linear system. Some of the works that deal with computing reachable sets of non-linear systems are: \cite{7809839,10.1007/978-3-642-39799-8_18,6987596,10.1007/978-3-662-46681-0_5,10.1007/978-3-642-24690-6_13,10.1007/978-3-662-46681-0_15,10.1007/978-3-319-02444-8_37,ARCH15:An_Introduction_to_CORA}.

\section{Preliminaries}
\label{sec:prelims}
In this section we layout the definitions and the notations that are used in the rest of the paper. 
A closed interval is denoted by $[a,b]$, \emph{i.e.} $[a,b]=\{x \in \mathbb{R}~|~a \le x \le b\}$. 
The system under consideration evolves in $\reals^n$, called \emph{state space}.
For $p \in \reals$, $|p|$ denotes the absolute value of $p$.
States and vectors in $\reals^n$ are represented as $x$ and $v$ respectively, and $||x||$ denotes the Euclidean norm.
Given an $\delta > 0$ and $x\in \reals^n$, $B_{\delta}(x) = \{~y~|~ ||y - x || \leq \epsilon\}$.
For a set $S \subseteq \reals^n$, $B_{\delta}(S) = \bigcup_{x \in S} B_{\delta}(x)$.
Sometimes we also refer to $B_{\delta}(S)$ as $bloat(S, \delta)$.
%
%
%
%
Given two sets $S_1$, $S_2$, the Hausdorff distance between them $max\{ sup_{x \in S_1} inf_{y \in S_2} ||x-y||, sup_{y \in S_2} inf_{x \in S_1} ||x-y|| \}$ is denoted as $dis(S_1, S_2)$. 
Convex hull of $S_1$ and $S_2$, $\{\lambda x + (1-\lambda) y ~|~ x \in S_1, y \in S_2, 0 \leq \lambda \leq 1\}$ is denoted as $\mathsf{ConvexHull(S_1, S_2)}$

Given a matrix $M \in \mathbb{R}^{m \times n}$, the $(i,j)^{th}$ element is denoted as $M[i,j]$. 
%
%
We overload the operator $|.|$ for matrices as well. Given a matrix $M \in \mathbb{R}^{n \times m}$, $|M|$ is a matrix such that, $\forall_{i,j}$ $|M|[i,j] = |M[i,j]|$.

\begin{definition} [Matrix Norms]
\label{def:matNorms}
Given a matrix $M \in \mathbb{R}^{n \times n}$, its matrix 2 norm is denoted as $||M||_2$ \emph{i.e.} $||M||_2=\sigma_{max}(M)$, and $||M||_F$ is the matrix Frobenius norm. We also use $||M||$ to denote $||M||_p$, where $p$ can be anything in $\{2, F\}$. For any matrix $M$, $||M||_2 \le ||M||_F$. 
\end{definition}
The maximum singular value of a matrix $M$ is denoted by $\sigma_{max}(M)$.
The domain of Boolean matrices of dimension $m \times n$ are denoted with $\mathbb{B}^{m \times n}$. 
The addition and multiplication operations on Boolean matrices are extensions of regular addition and multiplication operations with addition being the disjunction operation and multiplication being the conjunction operation.

\begin{definition}[Continuous Linear Dynamical Systems]
\label{def:contSys}
Given a matrix $A \in \mathbb{R}^{n\times n}$, a continuous linear dynamical system is denoted as $\dot{x} = Ax$
\end{definition}

\begin{definition}[Trajectories]
\label{def:traj}
A trajectory of the continuous linear dynamical system, denoted as $\xi_{A} : \mathbb{R}^n \times \mathbb{N} \rightarrow \mathbb{R}^n$, describes the evolution of the system in time. Given an initial state $x_0 \in \mathbb{R}^n$, the trajectory is defined as $\xi_{A}(x_0, t) = e^{At} x_0$
\end{definition}
We drop $A$ from the subscript of $\xi$ when it is clear from the context.

\begin{definition}[Reachable Set]
Given a linear dynamical system $\dot{x} = Ax$, initial set of states $\theta$, and a time step $t \in \mathbb{N}$, the reachable set of states 
\begin{equation}
\texttt{RS}(\theta, A, t) = \{~ x~|~\exists x_0 \in \theta, x = \xi_{A}(x_0, t)\}  
\end{equation}
\end{definition}
In this paper, we represent the reachable set of a linear system as a star set.

\begin{definition}[Star] \cite{10.1007/978-3-319-41528-4_26}
\label{def:stars}
A generalized star $S$ is defined as a tuple $\langle c, V, P \rangle$ where $c \in \mathbb{R}^n$ is called the {\em anchor}, $V = \{v_1, v_2, \ldots, v_m\}$ where $\forall i, 1 \leq i \leq m, v_i \in \mathbb{R}^n$ are called a set of {\em generators} (that span $\mathbb{R}^n$), and $P: \mathbb{R}^m \rightarrow \{\top, \bot\}$ is called the {\em predicate}. The set of states represented by a generalized star is defined as:
\begin{eqnarray}
&&[\![S]\!] = \{~x~|~ \exists \alpha_1, \alpha_2, \ldots, \alpha_m \mbox{ such that }~~~~~~~~~~~~~~~~~~  \nonumber \\
&&~~~~~~~~~~~~~~~ x = c + \Sigma_{i=1}^{m}\alpha_i v_i \mbox{ and } P(\alpha_1, \alpha_2, \ldots, \alpha_m) = \top  \}
\end{eqnarray}
\end{definition}
We abuse notation and use $S$ to refer to both $[\![S]\!]$ and $S$.

\begin{definition}[from~\cite{10.1007/978-3-319-41528-4_26}]
\label{def:reachSetStar}
Given a linear dynamical system $\dot{x} = Ax$ and initial set $\theta$ represented as a star set $\theta \triangleq \langle c, V, P \rangle$, the reachable set $\texttt{RS}(\theta, A, t)$ is also a star with anchor $e^{At} c$, generators $V' = \{e^{At}v_1, e^{At}v_2, \ldots, e^{At}v_m\}$, and the same predicate $P$ as $\theta$.
$$
\RS(\theta, A, t) \triangleq \langle e^{At} c, V', P\rangle 
$$
\end{definition}

\subsection{Linear Uncertain Systems}

\begin{definition}
\label{def:pme}
Given a set of symbolic variables $Vars = \{y_1, \ldots, y_k\}$, we denote all possible polynomial expressions over symbolic variables as \emph{PE}. 
A matrix $M$ is called a \emph{polynomial matrix expression} if $M \in \{\reals \cup \mathsf{PE}\}^{n \times n}$. 
That is, the elements of the matrix $M$ are not just real numbers, but can also be polynomial expressions.
\end{definition}
A \emph{polynomial matrix expression} $M$ can also be represented as a polynomial over $Vars$, where, the coefficients of each monomial is a matrix instead of a real value.
Polynomial matrix expressions which do not have any second or higher degree terms are called \emph{linear matrix expressions}.
\begin{definition}[Uncertain Linear Systems and Reachable Set] 
An uncertain linear dynamical system, denoted as $\dot{x} = \Lambda x$ where $\Lambda \subseteq \reals^{n \times n}$.
We represent $\Lambda$ as $\Lambda = \langle M_{\Lambda}, D_{\Lambda} \rangle$ where $M_{\Lambda}$ is a polynomial matrix expression over $Vars$ and $D_{\Lambda} \subseteq \reals^k$. The set of matrices represented by $\Lambda$ are
$$
\Lambda = \{A~|~ A = M_{\Lambda} [y_1 \leftarrow \gamma_1, \ldots, y_k \leftarrow \gamma_k] \wedge (\gamma_1, \ldots, \gamma_k) \in D_{\Lambda} \}.
$$
Where $M_{\Lambda} [y_1 \leftarrow \gamma_1, \ldots, y_k \leftarrow \gamma_k]$ denotes the evaluation of the matrix polynomial expression $M_{\Lambda}$ with $y_1$ assigned the value $\gamma_1$, $y_2$ assigned $\gamma_2$, \ldots, and $y_k$ assigned the value $\gamma_k$, respectively.
\end{definition}
In this paper, we only consider uncertain systems where $D_{\Lambda}$ is a bounded polytope.

%

\begin{definition}
\label{def:sampledynamics}
Given an uncertain linear system $\dot{x} = \Lambda x$, we say that $\dot{x} = Ax$ as a sample dynamics of the uncertain linear system if $A \in \Lambda$.
We denote the dynamics obtained by assigning variables $y_1, y_2, \ldots, y_k$ the values $\gamma_1, \gamma_2, \ldots, \gamma_k$ respectively as $\dot{x} = \Lambda_{\overline{\gamma}} x$.
\end{definition}

\begin{example}
Consider an uncertain linear system $\dot{x} = \Lambda x$ where $\Lambda = \langle M_{\Lambda}, D_{\Lambda} \rangle$ where 
$M_{\Lambda} = \begin{bmatrix}
    ac  & 2a \\
    0   & b^2 
\end{bmatrix}
$ is a polynomial matrix expression over $Vars = \{a, b, c\}$ $D_{\Lambda} = \{a \in [0,1] \wedge b \in [0,1] \wedge c \in [0,1]\}$.
One can represent the polynomial matrix expression as $M_{\Lambda} = M_0 + M_1 a + M_2 ac + M_3 b^2$ where $M_0$ is the zero matrix, 
$M_1 = \begin{bmatrix}
    0  & 2 \\
    0   & 0 
\end{bmatrix}
$,
$M_2 = \begin{bmatrix}
    1  & 0 \\
    0   & 0 
\end{bmatrix}
$, and 
$M_3 = \begin{bmatrix}
    0  & 0 \\
    0   & 1 
\end{bmatrix}
$.
A sample dynamics obtained by assigning $a=1, b = 1,$ and $c=1$, i.e., $\dot{x} = \Lambda_{(1,1,1)} x$ is 
$\dot{x} = \begin{bmatrix}
    1  & 2 \\
    0   & 1 
\end{bmatrix} x$
\end{example}


%
%
One of the routinely encountered form of uncertain systems is where $M_{\Lambda}$ is a linear matrix expression and $D_{\Lambda}$ is a hyper-rectangle in $\reals^k$. In such cases, the uncertain linear system can be represented as an interval matrix.


%


\begin{definition}[Interval Matrices]
\label{def:intervalMatrices}
An interval matrix $\Omega \subset \mathbb{R}^{m \times n}$, where $\forall_{i,j} \Omega[i,j]=[a_{i,j},b_{i,j}]$ and $[a_{i,j},b_{i,j}]$ is a closed interval. 
\end{definition}



\begin{definition}[Matrix Support (from~\cite{ghosh2019robust})] 
\label{def:matrixStructure}
Given a matrix $M \in \mathbb{R}^{m \times n}$, $\texttt{supp}(M) = B$ where $B \in \mathbb{B}^{m \times n}$ such that for all $i, 1 \leq i \leq m, 1 \leq j \leq n$, $B[i,j] = 0$ if and only if $M[i,j] = 0$.
\end{definition}

\begin{definition}[Sub-Support and Super-Support]
\label{def:subSupport}
Given Boolean matrices $B_1, B_2 \in \mathbb{B}^{m\times n}$, we say that $B_1$ is a \emph{sub-support} of $B_2$, denoted as $B_1 \leq B_2$ if and only if for all $i,j$, 
if $B_1[i,j] = 1$ then $B_2[i,j] = 1$. An equivalent formulation is for all $i,j$, 
if $B_2[i,j] = 0$ then $B_1[i,j] = 0$. We also say that $B_2$ is a \emph{super-support} of $B_1$.
\end{definition}

\section{Reachable Sets With Probabilistic Guarantees From Hypothesis Testing}
\label{sec:appx_props}

%
In this section, we present a technique where the probabilistic guarantees provided by the reachable set is verified by using hypothesis testing.
This approach involves three sub-routines named: \textsf{Generator}, \textsf{Verifier}, \textsf{Refiner}.
The \textsf{Generator} sub-routine computes a candidate reachable set by performing operations on reachable set of a dynamics sampled from $\Lambda$.
The \textsf{Verifier} sub-routine uses statistical verification techniques, more specifically, \emph{Jeffries Bayes Factor} test, to validate that the candidate set provided by \textsf{Generator} contains the reachable set of dynamics from $\Lambda$ with high probability.
The \textsf{Refiner} sub-routine is only invoked if the \textsf{Verifier} infers that the candidate set fails the test.
In such cases, the \textsf{Refiner} reuses the artifacts generation during the verification phase and invokes \textsf{Generator} with additional arguments so that it will generate a different candidate set.
The workflow is given in Figure~\ref{fig:meth}.


\begin{wrapfigure}{r}{0.5\textwidth}
  \begin{center}
    \includegraphics[width=5.5cm,height=3.9cm]{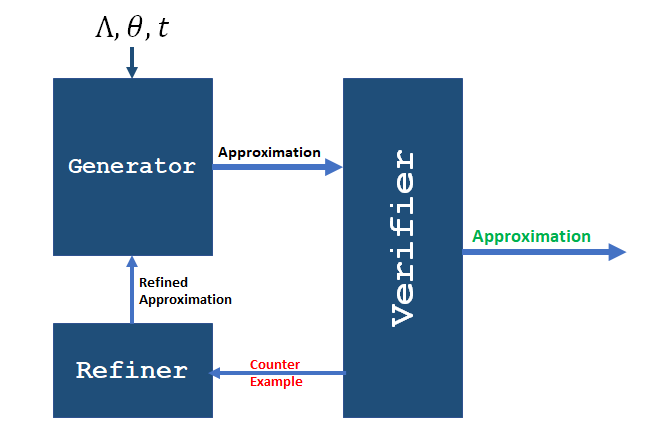}
  \end{center}
  \caption{Workflow for computing probably approximately correct reachable set for uncertain linear systems.}
\label{fig:meth}
\end{wrapfigure}

%
%
Suppose that for the uncertain linear system $\dot{x} = \Lambda x$, the valuations to variables $Vars$ is drawn according to a probability distribution $\mu$.
%
%
Additionally, suppose that $\forall \overline{\gamma} \notin D_{\Lambda}, \mu(\overline{\gamma}) = 0$ and $\int_{\overline{\gamma} \in D_{\Lambda}} \mu(\overline{\gamma}) d\overline{\gamma} = 1$.
That is, the probability density function is zero outside the domain $D_{\Lambda}$.

\begin{definition}
\label{def:probReach}
Given a linear uncertain system $\dot{x} = \Lambda x$, an initial set $\theta$, and time $t$, a set $\nS \subseteq \reals^n$ is called a $p$-reachable set at time $t$ where: 
$$
\int_{RS(\theta, \Lambda_{\overline{\gamma}},t) \subseteq \nS} \mu(\overline{\gamma}) d\overline{\gamma} = p.
$$

Given a set $\nS$, we denote this probability as $Prob(\Lambda, \theta, t, \nS)$.
\end{definition}

Observe that $\reals^n$ is a 1-reachable set for any initial set for any uncertain linear system and the emptyset is a 0-reachable set for any initial set for any uncertain system.
Given a $\nS$, and time $t$, it is challenging to compute $Prob(\Lambda, \theta, t, \nS)$, the fraction of reachable sets from $\Lambda$ that are contained in $\nS$.
Therefore, instead of computing the probability of the reachable set being contained in $\nS$, we determine whether the probability is above a certain threshold with high confidence.
More specifically, given a threshold $c$, we estimate that the probability of Type 1 error i.e., inferring that $Prob(\Lambda, \theta, t, \nS) \geq c$ whereas in reality $Prob(\Lambda, \theta, t, \nS) < c$ is bounded by $\delta$.

\begin{definition}
\label{def:pac-reachability}
Given an uncertain system $\dot{x} = \Lambda x$, initial set $\theta$, time $t$, threshold $c \in [0,1]$, and probability bound $\delta$, the set $\nS$ is a $c,\delta$  probably approximate reachable set if
$
Type1Err[ Prob(\Lambda, \theta, t, \nS) \geq c] < \delta.
$
\end{definition}
The rest of this section is organized as follows. In Subsection \ref{sec:appx_comp}, we discuss several heuristics to generate the candidate reachable set (the \tnm{Generator} module). 
In Subsection \ref{sec:stat_ver}, we present the \emph{Statistical Hypothesis Testing} framework that is used for validating the properties of the candidate reachable set (the \tnm{Verifier} module).
The full description of the algorithm that invokes \tnm{Generator}, \tnm{Verifier} and \tnm{Refiner} is given in Subsection \ref{sec:main_algo}.

\subsection{Generator: Computing Candidate Reachable Set}
\label{sec:appx_comp}
In this subsection, we present several heuristics for computing the candidate for provably approximate reachable set.
Since the validity of a candidate reachable set would be evaluated using statistical hypothesis testing framework, one can use any heuristic for generating a candidate reachable set.
A typical candidate would be generated by computing reachable set of dynamics sampled from $\Lambda$ in various ways.
%
%
%
We present five informed heuristics for computing the candidate reachable set.
One of the heuristics, more specifically, the heuristic that generates the candidate based on \emph{structure of uncertainties} is a \textbf{conservative overapproximation} of the reachable set of uncertain system.
For each of the heuristics we propose, we provide a basic rationale for proposing such a heuristic and also provide theoretical guarantees if any of such a heuristic.
%

\subsubsection{Bloating the Reachable Set of Mean Dynamics}
\label{subsec:centerBased}


%

One way to model the uncertainties in the dynamics is to construct a \emph{mean} dynamical system and consider the uncertainties as external inputs that change as a function of the state.
In such instances, one can compute the reachable set by \emph{bloating} the reachable set of the mean dynamical system by the appropriate value.
For computing rigorous enclosure of reachable set, one can use the exponential of the Lipschitz constant of the dynamics to compute this bloating factor.
However, most often, such analysis results in an overapproximation that is too conservative and is often not useful for proving safety.

In this heuristic, we first compute the reachable set of the mean dynamical system.
We then compute the Hausdorff distance of the mean reachable set and the reachable set of a random sample of $N$ dynamics from $\dot{x} = \Lambda x$.
%
We then bloat the reachable set of the mean dynamics by the computed Hausdorff distance and an additional $\epsilon$ that is pre-determined by the user.
The precise description of generating the candidate set is given below.

\begin{enumerate}
    \item We compute the \emph{mean dynamics} $A_c \in \Lambda$. If the uncertainties in the dynamics are intervals, then, this mean dynamics is often the matrix obtained by substituting all the intervals with the corresponding mid-points. 

 \item Let, $S=\{A_1, A_2, \cdots, A_N\}$ be a set of $N$ random \emph{sample dynamics} of $\Lambda$, \emph{i.e.} $\underset{1 \le i \le N}{\forall} A_i \in \Lambda$. 
%

 \item Let, 
$
d = \underset{A \in S}{\texttt{max}}~\Big\{ \texttt{dis}\big(\texttt{RS}_{\theta}(A_{c},t),\texttt{RS}_{\theta}(A,t) \big) \Big\}
$. 
Where, $\texttt{dis}(S_1,S_2)$ is defined as the \emph{Hausdorff distance} between $S_1$ and $S_2$. 

 \item Compute $\texttt{Bloat}\big(\texttt{RS}_{\theta}(A_{c},t),d+\epsilon\big)$, where $\epsilon > 0$ is specified by the user.
 
\end{enumerate}


We provide an example illustrating the above technique in Example \ref{ex:reachmean} (Appendix). The reachable set returned by this procedure is denoted as \\ $\mathsf{bloatMean}(\Lambda, \theta, t)$.

\subsubsection{Structure Guided Reachable Set Computation}
\label{subsec:sepBased}



We now present another heuristic for generating candidate reachable set by sampling the uncertain system based on the structure of the coefficients in the uncertain matrix polynomial expression.
In~\cite{ghosh2019robust}, the authors only considered uncertain matrices as linear matrix expressions and present sufficient conditions where the linear matrix expressions are closed under multiplication.
We now present a sampling based reachable set computation technique that computes a provable overapproximation of the uncertain systems considered in~\cite{ghosh2019robust}.
However, if the conditions provided in~\cite{ghosh2019robust} are not satisfied, then our reachable set need not be a conservative overapproximation.
Hence, for such instances, one has to perform statistical verification, similar to the verification of the candidate reachable sets for other heuristics.

We follow~\cite{ghosh2019robust} and separate the variables in uncertain linear system as \emph{LME preserving} and \emph{non LME preserving}.
For the LME preserving variables, we sample all the vertices of the domain $D_{\Lambda}$ and grid based uniform sampling for all non LME preserving variables.
Similar to the reasons presented in \emph{Mean Dynamics} based heuristic, we compute an oriented rectangular hull overapproximation of the reachable sets of the sample dynamics to generate the candidate reachable set.

\begin{definition}
\label{def:lmepreserving}
Given an uncertain linear system $\dot{x} = \Lambda x$, a variable $y_i \in Vars$ is called \emph{LME preserving} for $M_{\Lambda}$ if and only if the following conditions are satisfied.
\begin{enumerate}
\item The variable $y_i$ only has terms of degree one in $M_{\Lambda}$.
\item $supp(M_0) \times supp(M_0) \leq supp(M_0)$, where $M_0$ is the constant term in the matrix polynomial expression $M_{\Lambda}$.
\item $supp(M_0) \times supp(M_i) + supp(M_i) \times supp(M_0) \leq supp(M_i)$ where $M_i$ is the coefficient matrix of the variable $y_i$ in $M_{\Lambda}$. 
\end{enumerate}
All the variables that do not satisfy these conditions are considered as \emph{non LME preserving}.
\end{definition}


We compute the oriented rectangular hull overapproximation of the reachable sets of dynamics sampled from the vertices of $D_{\Lambda}$ and for non LME preserving variables, we generate additional sample dynamics, generated at regular intervals in $D_{\Lambda}$.
The steps for computing the candidate reachable set using this heuristic are given below.


\begin{enumerate}
\item Compute all LME preserving variables in $Vars$.

\item Let $Vertices_{\Lambda} = \{A_1, A_2, \ldots, A_M\}$ be the sample dynamics obtained by evaluating the uncertain linear system on the vertices of $D_{\Lambda}$. 
Additionally, for all non LME preserving variables, generate $K$ samples at regular intervals for each variable from $D_{\Lambda}$, called $Uniform_{\Lambda}$.

\item Compute the reachable set of all the dynamics in $Vertices_{\Lambda}$ and $Uniform_{\Lambda}$ at time $t$. 


\item Compute the oriented rectangular hull overapproximation of all the reachable sets computed in $Vertices_{\Lambda}$ and $Uniform_{\Lambda}$. 
\end{enumerate}

Given a linear uncertain dynamics $\dot{x}=\Lambda x$, an initial set $\theta$ and time $t$, we denote a reachable set computed using this heuristic as \texttt{uniformORH}($\Lambda, \theta, t$).

%
%

If all the variables in $\Lambda$ are LME preserving, then $\texttt{uniformORH}(\Lambda, \theta, t)$ is an overapproximation of the reachable set of all the dynamics in $\Lambda$ for initial set $\theta$. We provide a proof of this is Section \ref{appxsubsubsec:conservative}

\subsubsection{Sampling Based on Singular Values}
\label{subsec:maxSVBased}

Building on the heuristic using \emph{Mean Dynamics}, it is unclear why one should pick the reachable set of mean dynamics and bloat it.
We enquire whether there is a more \emph{appropriate} sample reachable set that can be bloated to compute the candidate reachable set.

As mentioned earlier, one can model uncertainties as external inputs to a dynamical system.
Consider the dynamical system $\dot{x} = Ax + U$ where $U$ is an input from the environment.
If the external input $U$ stays constant and if $A$ is invertible, the trajectory for the system with input can be written as $\xi(x_0,t) = e^{At} x_0 + A^{-1}(e^{At} - I) U$.
The effect of the input is therefore magnified by a factor of $e^{At}$.
Therefore, a more appropriate candidate for bloating of reachable set would be one where the magnitude of $e^{At}$ is higher.
Since we know that the singular values of a matrix directly effects the magnitude of matrix exponential, we pick the matrix with higher singular value.
Once such a candidate from $\Lambda$ has been selected, we bloat it in a way similar to \emph{Mean Dynamics} based heuristic.

Here, we restrict our attention to uncertain systems represented as interval matrices and leverage prior work from the literature~\cite{farhadsefat2011norms}.
%
%
%
%
In \cite{farhadsefat2011norms}, the notation of matrix measure is extended to interval matrices, such as $\Lambda$, when represented as an interval matrix. 
Given an interval matrix $\Lambda$, the \emph{norm-2} of $\Lambda$, $||\Lambda||$ is defined as
$
||\Lambda||_2 = \texttt{sup} \big\{~||A||_2~|~A~\in \Lambda \big\}
$.

Let, $A_{max} \in \Lambda$ be an element in $\Lambda$ with the maximum singular value, \emph{i.e.} $\forall_{A \in \Lambda} \sigma_{max}(A_{max}) \ge \sigma_{max}(A)$; i.e., $\sigma_{max}(A_{max}) = || \Lambda ||_2$

Theorem 7 in \cite{farhadsefat2011norms} provides an algorithm to find $|| \Lambda ||_2$ as follows:
\begin{equation}
\label{eq:maxSVThm}
    ||\Lambda||_2
    =
    \underset{|y|=e_n, |z|=e_n}{\texttt{max}}
    ||
    A_c + (yz^T) \circ \Delta
    ||_2
\end{equation}
Where $A_c$ is obtained by substituting each element in $\Lambda$ with the mid-point of the interval and $\Delta$ is obtained by substituting each element in $\Lambda$ with \emph{half the width} of the interval.
%
%
We apply Equation \ref{eq:maxSVThm}, and compute $A_{max}$ by considering all $2^{n}$ possibilities for $y$ and $2^n$ possibilities for $z$, respectively.
After computing the matrix with the largest singular value, we bloat this in a manner similar to the bloating of reachable set of mean dynamics. 
The steps for computing the candidate reachable set using the maximum singular value matrix are as follows:
%

\begin{enumerate}
\item Compute the \emph{maximum singular value matrix} $A_{max} \in \Lambda$ using the technique described in~\cite{farhadsefat2011norms} and compute the reachable set $\texttt{RS}_{\theta}(A_{max},t)$.
%
\item Let, $S=\{A_1, A_2, \cdots, A_N\}$ be a set of $N$ random \emph{sample dynamics} of $\Lambda$, \emph{i.e.} $\underset{1 \le i \le N}{\forall} A_i \in \Lambda$. 
Let, 
$
d = \underset{A \in S}{\texttt{max}}~\Big\{ \texttt{dis}\big(\texttt{RS}_{\theta}(A_{max},t),\texttt{RS}_{\theta}(A,t) \big) \Big\}
$. Where, $\texttt{dis}(S_1,S_2)$ is the \emph{Hausdorff distance} between the two sets. 

\item Compute $\texttt{Bloat}\big(\texttt{RS}_{\theta}(A_{max},t),d+\epsilon\big)$, where $\epsilon > 0$ is provided by the user.
\end{enumerate}

Given a linear uncertain dynamics $\dot{x}=\Lambda x$, an initial set $\theta$ and time $t$, we denote the candidate reachable set computed using this method as \texttt{maxSV}($\Lambda, \theta, t$). We provide an example illustrating the above technique in Example \ref{ex:reachSV} (Appendix)

\subsubsection{Convex Overapproximation Using Oriented Rectangular Hulls}
\label{subsec:orhBased}


Instead of bloating the reachable set from a specific sample from the uncertain linear system, it is natural to extend it to a random set of samples generated from the uncertain system. 
After computing the reachable sets from samples generated, we use \emph{Oriented Rectangular Hulls} to aggregate the reachable sets. 
Oriented rectangular hulls (proposed in~\cite{10.1007/3-540-36580-X_35}) is a parallelotope where the bounding half-planes are chosen after performing principle component analysis (PCA) of various sample trajectories.
After performing PCA of the sample trajectories to compute the template directions, we compute a convex overapproximation of the union of sample reachable sets by using linear programming.



%

\begin{enumerate}
\item Let, $S=\{A_1, A_2, \cdots, A_N\}$ be a set of $N$ random \emph{sample dynamics} of $\Lambda$, \emph{i.e.} $\underset{1 \le j \le N}{\forall} A_j \in \Lambda$. A random \emph{sample dynamics} $A_j \in \Lambda$ is obtained by assigning a random set of value (within the given range of the variable) to the uncertain variables of $\Lambda$.

\item Compute reachable set of all the \emph{sample dynamics} in $S$:
$\mathscr{R}$=\{$\texttt{RS}_{\theta}(A_1,t)$, $\texttt{RS}_{\theta}(A_2,t)$, $\cdots$, $\texttt{RS}_{\theta}(A_N,t)$\}.

\item Generate $h$ randomly sampled points from each reachable set $\RS_{\theta}(A_i, t)$ and compute the \emph{Oriented Rectangular Hull}~\cite{10.1007/3-540-36580-X_35} that selects the template directions by performing the PCA of the sample points and then compute the upper and lower bounds for each template direction.

\item For each template direction $u_{i}^{T}$ in the oriented rectangular hull, compute $max_{i,j}$ and $min_{i,j}$ as 
\begin{eqnarray*}
max_{i,j} &=& \mathsf{max}~ u_{i}^{T}x \mbox{ for } x \in \RS_{\theta}(A_j,t). \\
min_{i,j} &=& \mathsf{min}~ u_{i}^{T}x \mbox{ for } x \in \RS_{\theta}(A_j,t).
\end{eqnarray*}
using linear programming.

\item For each template direction $u_{i}^T$, compute $ub_i = \mathsf{max}_{j \in 1 .. N}\{ max_{i,j} \}$ and $lb_i = \mathsf{min}_{j \in 1 .. N}\{ min_{i,j} \}$.

\item Represent the oriented rectangular hull overapproximation as $ \wedge_{i = 1}^n lb_i - \epsilon \leq u_i^T x \leq ub_i + \epsilon$, where $\epsilon > 0$ is provided by the user. 

\end{enumerate}

Given an uncertain linear system $\dot{x} = \Lambda x$, an initial set $\theta$ and time $t$, we denote the candidate reachable set using this heuristic as $\texttt{encloseORH}(\Lambda, \theta, t)$.


One specific instance of computing the candidate reachable set would be to use the orthonormal directions for the template directions, instead of performing PCA on sample trajectories.
For such template directions, one need not invoke optimization problem as performed in step 4), instead one can compute an axis aligned overapproximation using interval arithmetic.
Additionally, performing convex hull of such axis aligned sets is computationally very cheap.
We represent such axis aligned reachable sets as $\texttt{boxBloat}(\Lambda, \theta, t)$.

Figure \ref{fig:exORH} (Appendix) shows an example of \emph{Oriented Rectangular Hull}  computed around a set of reachable sets for one of the benchmarks considered in this paper. We also provide an example illustrating the above technique in Example \ref{ex:reachORH} (Appendix).

\subsection{Statistical Verification of Candidate Reachable Sets}
\label{sec:stat_ver}
In this subsection, we present the statistical hypothesis testing framework that we use for proving that the candidate reachable set is an overapproximation of the reachable set of uncertain linear system with high probability.
%
%
More specifically, given a candidate reachable set $\RS$, confidence threshold $c$, and error $\delta$, we employ \emph{Jeffries Bayes Factor} test \cite{7945001} to verify if
$
P[ Prob(\Lambda, \theta, t, \RS) < c] < \delta.
$
%
We employ Bayes Factor test because of two reasons.
First, this check requires computing the reachable set for sample dynamics generated from $\Lambda$, which is efficient.
Second, the number of samples to be verified using this technique is independent of the number of uncertainties in $\Lambda$ or the size of the domain $D_{\Lambda}$.  
This is in contrast to other reachable set computation techniques where the number of uncertainties and the size of the domain $D_{\Lambda}$ affect the computation time for reachable set.
%
%
The algorithm we propose in this method will be the \tnm{Verifier} subroutine in the overall workflow.

In \emph{Statistical Hypothesis Testing}, one is required to form two hypotheses --- the null hypothesis $H_0$ and the alternate hypothesis $H_1$. 
After the hypotheses are formalized, evidence is gathered by randomly sampling (finite number of samples) the sample space --- once sufficient evidence has been gathered to support either of the hypotheses \emph{i.e.} $H_0$ or $H_1$, the algorithm terminates by \emph{accepting} the hypothesis, which has been supported by the observed random samples and \emph{rejecting} the other one. The criterion for \emph{accepting/rejecting} a hypothesis is based on the method one chooses. 
In this work, we choose a \emph{Bayesian approach} --- \emph{Jeffries Bayes Factor} test as in \cite{7945001}, as it is suitable and yet simple for our purpose. 
The probability of wrongly \emph{accepting} $H_1$, when $H_0$ is the actual truth, is known as the \emph{type I error}.
Informally, if $H_1$ states that the given \emph{approximation property} is true ($H_0$ states otherwise) ---  for our answer to be credible, the value of \emph{type I error} has to be sufficiently low.

Given a confidence value $c \in [0,1]$, our \emph{null hypothesis} $H_0$ and \emph{alternate hypothesis} $H_1$ are as follows:
\begin{eqnarray}
H_0: Prob(\Lambda, \theta, t, \RS) < c \\
H_1: Prob(\Lambda, \theta, t, \RS) \geq c
\end{eqnarray}
Using Bayes Factor Test, we can statistically verify that $H_0$ holds with very low probability. That is, the proposed reachable set is an overapproximation of the reachable set with high confidence.

Let, $A_1, A_2, \cdots, A_K$ be a set of $K$ \emph{sample dynamics} of $\Lambda$, \emph{s.t.} 
$\underset{1 \le j \le K}{\forall} A_j \in \Lambda \land 
\RS_{\theta}(A_i, t) \subseteq \RS
$.
The probability of this happening, given the \emph{null hypothesis} $H_0$ is true, is given by:
$$
\mathbb{P}\big[
\bigwedge\limits_{j=1}^{K} \RS_{\theta}(A_i, t) \subseteq \RS ~
|~H_0
\big]
=
\int_{0}^c q^K ~ dq
$$
Similarly, the probability of $A_1, A_2, \cdots, A_K$ satisfies $\RS_{\theta}(A_i, t) \subseteq \RS$, given $H_1$ is true is:
$$
\mathbb{P}\big[
\bigwedge\limits_{j=1}^{K} \RS_{\theta}(A_i, t) \subseteq \RS ~
|~H_1
\big]
=
\int_{c}^1 q^K ~ dq
$$
Given the above two probabilities, the \emph{Bayes Factor} is given by the ratio of the above two probabilities:
\begin{equation}
\label{eq:BayesFact}
    \frac{\mathbb{P}\big[
\bigwedge\limits_{j=1}^{K} \RS_{\theta}(A_i, t) \subseteq \RS ~
|~H_1
\big]}{\mathbb{P}\big[
\bigwedge\limits_{j=1}^{K} \RS_{\theta}(A_i, t) \subseteq \RS ~
|~H_0
\big]}=\frac{1-c^{K+1}}{c^{K+1}}
\end{equation}


Intuitively, the \emph{Bayes Factor} is the strength of evidence favoring one hypothesis over another. A sufficiently high value of $B$ indicates that the evidence favors $H_1$ over $H_0$. Given $B$, we can compute $K$   from Equation \ref{eq:BayesFact} (as given in \cite{7945001})
\begin{eqnarray}
\frac{1-c^{K+1}}{c^{K+1}} &>& B \\
K &>& -\frac{log(B+1)}{log(c)} \label{eq:K}
\end{eqnarray}

Given $B$ and $c$, \cite{7945001} gives the formula for calculating \emph{type I error} (falsely accepting $H_1$, when $H_0$ is the actual truth), denoted as $err(B,c)$ as follows:
\begin{equation}
\label{eq:typeIerror}
    err(B,c)=\frac{c}{c+(1-c)B}
\end{equation}
Note that, $err(B,c)$ is therefore the probability that our answer is wrong --- we infer $Prob(\Lambda, \theta, t, \RS) \geq c$, when in reality $Prob(\Lambda, \theta, t, \RS) < c$.

Our algorithm, as used by the \tnm{Verifier} module to \emph{accept/reject} $H_0$ or $H_1$ is therefore given as follows:

\begin{enumerate}
 \item For a chosen (sufficiently) high value of $B$ and $c$, compute $K$ as in Equation \ref{eq:K}.

 \item Let, $S=\{A_1, A_2, \cdots, A_K\}$ be a set of $K$ random \emph{sample dynamics} of $\Lambda$ generated according to the probability density function $\mu$.

 \item If all the $K$ samples in $S$ satisfies:
$
\RS_{\theta}(A_i, t) \subseteq \RS
$
 --- we accept $H_1$.

\item Otherwise, we \emph{reject} $H_1$ and accept $H_0$ with a \emph{counter example} $\mathscr{C}$. Where $\mathscr{C}=\texttt{RS}_{\theta}(A',t)$, such that $\RS_{\theta}(A',t) \nsubseteq \RS$.
\end{enumerate}
In the above mentioned algorithm (steps 1-4), the probability of falsely accepting $H_1$, when $H_0$ is the real truth (\emph{type I error}) is given by $err(B,c)$.




\subsection{Main Algorithm}
\label{sec:main_algo}
In this subsection, we combine all the sub-routines namely \tnm{Generator} (Subsection \ref{sec:appx_comp}), \tnm{Verifier} (Subsection \ref{sec:stat_ver}) and \tnm{Refiner}. 
The input to our algorithm is: (i) an uncertain dynamics $\Lambda$, (ii) an initial set $\theta$, (iii) time $t$. Before we go on to describe our algorithm, we briefly recall the functions of three sub-routines:

\begin{enumerate}
\item $\tnm{Generator}$: When given as input the uncertain dynamics $\Lambda$, initial set $\theta$, and time $t$, the generator provides a candidate reachable set that uses one of the four heuristics as described in 
%
Subsection \ref{sec:appx_comp}.
\item \tnm{Verifier}: When given the input the property  of interest (whether the reachable set is contained in the candidate reachable set or is from bounded distance from it), the confidence value (generally $c=0.99$), and the threshold of type I error, the \tnm{Verifier} either accepts the candidate reachable set or rejects it by producing an instance of the dynamics $A'$ that does not satisfy the property of interest (that is, the reachable set is not contained in the candidate set or not within the bounded distance from candidate set). 
\item \tnm{Refiner}: Given the candidate reachable set $\RS$ rejected by the verifier and the instance of reachable set that violates the property, the refiner increases the value of $\epsilon$ used in the bloating of each set such that $\epsilon > \texttt{dis}(\RS, \RS_{\theta}(A',t))$. This new value of $\epsilon$ is given as input to the generator in recomputing the candidate reachable set.
\end{enumerate}

Given the three modules \tnm{Generator}, \tnm{Verifier} and \tnm{Refiner} --- our main algorithm, an ensemble of the three modules, is given in Algorithm \ref{algo:main_algo}.

\begin{algorithm}
\caption{Reachable set computation}
\label{algo:main_algo}
\KwInput{Uncertain dynamics $\Lambda$, an initial set $\theta$, time $t$, $\epsilon > 0$.}
\KwOutput{reachable set $\mathscr{A}$ or counter example $\mathscr{C}$.}
$\RS$ := \tnm{Generator}($\Lambda,\theta,t, \epsilon$) ; \\
(res, $\mathscr{C})$ := \tnm{Verifier}($\RS$, $c$, Property) ; \\
\If{res = \texttt{accept}}
{
\Return $\RS$ ;
}
\While{ True }
{
$\epsilon$ := \tnm{Refine}($\RS,\mathscr{C}$) ;\\
$\RS$ := \tnm{Generator}($\Lambda,\theta,t, \epsilon$) ; \\
(res, $\mathscr{C})$ := \tnm{Verifier}($\RS$, $c$, Property) ; \\
\If{(res = \texttt{accept}) }
{
\Return $\RS$ ;
}
}
\end{algorithm}

\section{Reachable Sets Using Model Learning}
\label{sec:pac}
In this section, we introduce Model Learning based method to compute reachable set of a Linear Uncertain System. 
Here, we learn a probably approximately correct model of $e^{\Lambda t}$.
%
%
To efficiently learn the model and compute the reachable set, we approximate $e^{\Lambda t}$ as a linear function $w(x,\gamma,t)$ with the following template.
%
%
\begin{equation}
    \label{eq:template}
    w(x,\gamma,t) =
    e^{A_c t} x ~+
    \underbrace{
    \begin{bmatrix}
    c_{1,1} & \cdots & c_{1,p}\\
    \vdots & \ddots & \vdots\\
    c_{n,1} & \cdots & c_{n,p} \\
    \end{bmatrix}
    }_{C}
    \gamma
\end{equation}
where $x \in \theta$, $\gamma \in D_{\Lambda}$, $t \in \mathbb{R}_{\ge 0}$, $\theta$ is the initial set, $D_{\Lambda}$ is the set values of the uncertain variables and $A_c \in \Lambda$ is the \textit{mean dynamics}. 
Once such a model is learnt (\textit{i.e}, the parameters $c_{i,j}$ are learnt) --- given $\theta$ represented as a hyper-rectangle, $D_{\Lambda}$ represented as an interval set and a time step $t$, the reachable set can be computed very efficiently using interval arithmetic by plugging in the values in Equation \ref{eq:template}. 
%

\subsection{Background}
\label{sec:background}
In this subsection, we provide a brief introduction to \textit{Scenario Optimization} \cite{1632303} (as explained in \cite{xue2020pac}).
Given an optimization problem as follows [eq. (2) of \cite{xue2020pac}]:
\begin{equation}
\label{eq:def_scOpti}
\begin{aligned}
\min_{\gamma \in \Gamma \subseteq \mathbb{R}^m} \quad & c^\intercal \gamma\\
\textrm{s.t.} \quad & f_{\delta} \le 0,~~\forall \delta \in \Delta\\
\end{aligned}
\end{equation}
where $f_{\delta}(\gamma)$ are continuous continuous and convex functions over the
$m$ dimensional optimization variable $\gamma$
for every $\delta \in \Delta$.
Also, the sets $\Gamma$ and $\Delta$ are convex and closed.
The optimization problem in Equation \ref{eq:def_scOpti} can be reformulated and solved, with statistical guarantees, as shown in \cite{1632303}, as follows [eq. (3) of \cite{xue2020pac}]:
\begin{equation}
\label{eq:def_scOptiPAC}
\begin{aligned}
\min_{\gamma \in \Gamma \subseteq \mathbb{R}^m} \quad & c^\intercal \gamma\\
\textrm{s.t.} \quad & \land_{i=1}^K f_{\delta_i} \le 0,~~\forall \delta \in \Delta\\
\end{aligned}
\end{equation}

Note that the reformulation of the problem (in Equation \ref{eq:def_scOpti}), to the problem in equation \ref{eq:def_scOptiPAC} only considers finite subset out of the infinite number of constraints in Equation \ref{eq:def_scOpti}. Given, $\gamma^*$, the solution to Equation \ref{eq:def_scOptiPAC}, the statistical guarantees can be given as follows [Theorem 1 in \cite{xue2020pac}]:
\begin{theorem}
(Theorem 1 in \cite{xue2020pac})
If Equation \ref{eq:def_scOptiPAC} is feasible and attains unique optimal solution $\gamma^*$, and
\begin{equation}
    \epsilon \ge \frac{2}{K} (ln \frac{1}{\beta} + m)
\end{equation}
where $\epsilon \in (0,1)$ and $\beta \in (0,1)$ are user-chosen error and confidence levels, then with at least $1-\beta$ confidence, $\gamma^*$ satisfies all constraints in $\Delta$ but at most a fraction of probability measure $\epsilon$, \textit{i.e.}, $P \big(\{ \delta \in \Delta~|~ f_{\delta}(\gamma^*)\le 0\} \big) \le \epsilon$, where the confidence $\beta$ is the K-fold probability $p^K$ in $\Delta^k = \Delta \times \Delta \times \cdots \Delta$, which is the set to which the extracted sample $(\delta_1, \delta_2, \cdots, \delta_K)$ belongs.
\end{theorem}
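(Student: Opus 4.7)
The plan is to follow the classical scenario-optimization argument of Calafiore and Campi \cite{1632303}. For a candidate $\gamma \in \Gamma$, define the \emph{violation probability} $V(\gamma) := P(\{\delta \in \Delta : f_\delta(\gamma) > 0\})$, and write $\gamma^* = \gamma^*(\delta_1,\ldots,\delta_K)$ for the (assumed unique) minimizer of the sampled program in Equation~\ref{eq:def_scOptiPAC}. The theorem then reduces to showing
\begin{equation}
P^K\bigl(V(\gamma^*) > \epsilon\bigr) \le \beta
\end{equation}
whenever $\epsilon$ satisfies the stated sample-complexity inequality, since the conclusion $P(\{\delta : f_\delta(\gamma^*)\le 0\}) \ge 1-\epsilon$ with confidence $\ge 1-\beta$ is just the contrapositive.

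The first step is the \emph{support-constraint lemma}: call a constraint $f_{\delta_i}\le 0$ a support constraint if removing it strictly lowers the optimum of Equation~\ref{eq:def_scOptiPAC}. Because the cost is linear, $\Gamma \subseteq \mathbb{R}^m$ is convex, and each $f_\delta$ is convex, a Helly-type argument (the main structural lemma of \cite{1632303}) shows that with probability one the sampled program has at most $m$ support constraints. I would then partition the sample space according to which subset $I \subseteq \{1,\ldots,K\}$ of size $\le m$ indexes this support. Fix $I$ of size exactly $m$ and condition on the event ``support equals $I$''; on this event $\gamma^*$ is a function of $(\delta_i)_{i\in I}$ only, while the remaining $K-m$ samples are i.i.d., independent of $\gamma^*$, and by feasibility of the sampled program each satisfies $f_{\delta_j}(\gamma^*)\le 0$.

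Now I would run the union bound. On the bad event $V(\gamma^*)>\epsilon$, each of the $K-m$ non-support samples independently falls in a set of probability at most $1-\epsilon$, so the conditional probability of the bad event given ``support equals $I$'' is at most $(1-\epsilon)^{K-m}$. Summing over the $\binom{K}{m}$ choices of $I$ (smaller support sizes only shrink the bound) yields
\begin{equation}
P^K\bigl(V(\gamma^*)>\epsilon\bigr) \;\le\; \binom{K}{m}(1-\epsilon)^{K-m}.
\end{equation}
Using standard inequalities such as $\binom{K}{m}\le (eK/m)^m$ and $1-\epsilon \le e^{-\epsilon}$, a routine calculation shows that the right-hand side drops below $\beta$ as soon as $\epsilon K \ge 2\bigl(\ln(1/\beta)+m\bigr)$, which is precisely the hypothesis. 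The main technical obstacle is the support-constraint lemma: it is where both the convexity of the $f_\delta$ and the uniqueness of $\gamma^*$ are used in an essential way (uniqueness guarantees that the support is well-defined and of cardinality at most $m$), whereas the remainder is a clean combinatorial/tail-bound argument.
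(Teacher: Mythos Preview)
The paper does not prove this theorem at all: it is stated as background (explicitly labeled ``Theorem~1 in \cite{xue2020pac}'', itself drawn from Calafiore--Campi \cite{1632303}) and is simply quoted without argument. So there is no proof in the paper to compare against; your proposal goes strictly beyond what the paper does.

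That said, your sketch is the correct underlying argument, namely the Calafiore--Campi support-constraint route: bound the number of support constraints by $m$ via convexity and Helly, condition on which $m$ samples form the support so that $\gamma^*$ is fixed, observe that the remaining $K-m$ i.i.d.\ samples must each land in the satisfaction region, and union-bound over the $\binom{K}{m}$ support choices to get $P^K(V(\gamma^*)>\epsilon)\le \binom{K}{m}(1-\epsilon)^{K-m}$. One small caution: the conditioning step is more delicate than you indicate --- you must condition on the \emph{values} of the support samples (not merely which indices are support) and argue that the event ``these $m$ samples are the support'' forces the other $K-m$ samples into the feasible region of $\gamma^*$; the original papers handle this carefully, and your phrasing glosses over it. The final passage from $\binom{K}{m}(1-\epsilon)^{K-m}\le\beta$ to the clean sample-complexity bound $\epsilon \ge \tfrac{2}{K}(\ln\tfrac{1}{\beta}+m)$ also takes a bit more work than ``routine calculation'' suggests (the constant $2$ is not immediate from $\binom{K}{m}\le (eK/m)^m$ alone), but it is indeed a known sufficient condition in the scenario-optimization literature, and the paper is simply importing it wholesale.
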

\textit{The above conclusion still holds if the uniqueness of optimal solutions to Equation \ref{eq:def_scOptiPAC} is removed \cite{xue2020pac}, since a unique optimal solution can always be obtained according to Tie-break rule if multiple optimal solutions occur.}

In Subsections \ref{subsec:samplingPAC} and \ref{subsec:learnPAC}, we introduce various methods required for model learning (\textit{i.e} learning the parameters $C[i,j]$). Finally, in Subsection \ref{subsec:pacReach}, we propose our model learning based reachable set computation algorithm.

\subsection{Sampling}
\label{subsec:samplingPAC}
Given a set (represented as a \textit{generalized star}) $\theta \in \mathbb{R}^n$, the set of values of uncertain variables in $\Lambda$ represented as an interval set $D_{\Lambda} \subseteq \mathbb{R}^p$ (where $p$ is the number of uncertainties in $\Lambda$), time interval $t_{\delta} \subseteq [0,T]$; we define the following randomly chosen samples according to their corresponding probability distributions:

\begin{enumerate}
    \item Let, $\{x_1, x_2, \cdots, x_M\}$ be $M$ random samples of $\theta$, \textit{i.e.}, $x_i \in \theta$. We denote the set $\{x_1, x_2, \cdots, x_M\}$ as $\{x_i\}^M$.
    
    \item Let, $\{\gamma_1, \gamma_2, \cdots, \gamma_N\}$ be $N$ random samples of $D_{\Lambda}$, \textit{i.e.} $\gamma_j \in D_{\Lambda}$. We denote the set $\{\gamma_1, \gamma_2, \cdots, \gamma_N\}$ as $\{\gamma_j\}^N$.

    \item Let, $\{t_1, t_2, \cdots, t_O\}$ be $O$ random samples of $t_{\delta}$, \textit{i.e.} $t_k \in t_{\delta}$. We denote the set $\{t_1, t_2, \cdots, t_O\}$ as $\{t_k\}^O$.
\end{enumerate}
Given a $\gamma_j \in D_{\Lambda}$, let $A_{\gamma_j} \in \Lambda$ be the matrix obtained by assigning values $\gamma_j$ to the variables of $\Lambda$. We denote the reachable point for a given sample point $(x_i,\gamma_j,t_k)$ as $y_{i,j,k}$, \textit{i.e.}, $y_{i,j,k} = \RS(x_i,A_{\gamma_j},t_k)$. Let, $\{y_{i,j,k}\}^{M \cdot N \cdot O}=\big\{y_{i,j,k} = \RS(x_i,A_{\gamma_j},t_k)~|~ x_i \in \{x_i\}^M \land \gamma_j \in \{\gamma_j\}^N \land t_k \in \{t_k\}^O \big\}$.
Let, the shorthand of the above sampling method be denoted as $\texttt{sample} (\theta, D_{\Lambda}, t_{\delta})$ = \\ $\big(\{x_i\}^M, \{\gamma_j\}^N, \{t_k\}^O, \{y_{i,j,k}\}^{M \cdot N \cdot O} \big)$

\subsection{Learning the Model}
\label{subsec:learnPAC}
In this section, we formulate the problem of learning the parameters $C$ in Equation \ref{eq:template} as a \textit{Scenario Optimization} problem. And therefore the guarantees obtained will be statistical in nature.
Given an initial set $\theta$, a set of uncertainties $D_{\Lambda}$, and a time interval $t_\delta$, we denote its samples as $\texttt{sample} (\theta, D_{\Lambda}, t_{\delta})$ = $\big(\{x_i\}^M, \{\gamma_j\}^N, \{t_k\}^O, \{y_{i,j,k}\}^{M \cdot N \cdot O} \big)$. Using these samples, we learn the parameters $C$ (in Equation \ref{eq:template}) by formulating it as the following optimization problem:
\begin{equation}
\label{eq:scOptiPAC}
\begin{aligned}
\min_{C, \kappa} \quad & \kappa \\
\textrm{s.t.} \quad & \forall (x_i \in \{x_i\}^M, \gamma_j \in \{\gamma_j\}^N,
t_k \in \{t_k\}^O)\\
\quad & w(C,x_i,\gamma_j,t_k) - y_{i,j,k} \le \underbrace{[\kappa~ \kappa~ \cdots~ \kappa]^\intercal}_{n} \\
\quad & y_{i,j,k} - w(C,x_i,\gamma_j,t_k) \le \underbrace{[\kappa~ \kappa~ \cdots~ \kappa]^\intercal}_{n} \\
\quad & \forall_{r \in [1,n], s \in [1,p]}~ -U_C \le C[r,s] \le U_C \\
\quad & 0 \le \kappa \le U_{\kappa} \\
\end{aligned}
\end{equation}

Let, the solution to the above optimization problem be $(C^*,\kappa^*)$, where $C^*[i,j]=c^*_{i,j}$.
Let, the shorthand of this learning module be denoted as \\ \texttt{learnModel}$(\theta,D_{\Lambda},t_{\delta})=(C^*, \kappa^*)$

\subsubsection{Statistical Guarantees}
We now formalize the statistical guarantees on the solution to the \textit{Scenario Optimization} problem in Equation \ref{eq:scOptiPAC}. The statistical guarantees on the solution $(C^*, \kappa^*)$ are given in the following Theorem:
\begin{theorem}
\label{thm:pacGuarantee}
Let 
\begin{eqnarray*}
\tau(x,\gamma) &=& \{t \in t_\delta ~|~ w(C^*,x,\gamma,t) - \RS(x,A_\gamma,t) \le [\kappa^*~ \kappa^*~ \cdots~ \kappa^*]^\intercal \} \\
\chi (u) &=& \{ x \in \theta ~|~ P_t(\tau(x,u))\ge1-\epsilon_1 ~\text{with confidence}~ 1-\beta_1\} \\
\mathcal{U} &=& \{u \in D_{\Lambda}~|~ P_x(\chi(u)) \ge 1-\epsilon_2 ~\text{with confidence}~ 1-\beta_2\}
\end{eqnarray*}
then we have:
$
P_u(\{u ~|~ u \in \mathcal{U}\} \ge 1-\epsilon_3 ~\text{with confidence}~ 1-\beta_3
$
\\
where $P_t, P_x, P_u$ are the probability measures on the sets $t_\delta, \theta, U$ respectively. And 
$
\epsilon_i \ge \frac{2}{M} (ln \frac{1}{\beta_i} + np + 1)
$, for $i \in \{1, 2, 3\}$ 
\end{theorem}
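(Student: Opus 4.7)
The plan is to apply the scenario optimization result (Theorem~1 from \cite{xue2020pac}, quoted in Subsection~\ref{sec:background}) three times in a nested fashion, once for each of the sampling dimensions $t$, $x$, and $u=\gamma$. First I would verify that Equation~\ref{eq:scOptiPAC} falls into the convex scenario framework of Equation~\ref{eq:def_scOpti}. The decision variables are the $np$ entries of $C$ together with the scalar $\kappa$, giving $m = np + 1$. The objective $\kappa$ is linear, and each pointwise constraint indexed by a triple $(x_i,\gamma_j,t_k)$ is linear in $(C,\kappa)$ because the template $w(C,x,\gamma,t)$ in Equation~\ref{eq:template} is affine in $C$ with $x$, $\gamma$, $t$ treated as data. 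The bounds $-U_C \le C[r,s] \le U_C$ and $0 \le \kappa \le U_\kappa$ make the feasible domain a closed convex box, matching the hypotheses of the scenario theorem.

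Next I would peel the nesting from the inside out. Fix arbitrary $x \in \theta$ and $\gamma \in D_\Lambda$ and regard the constraint-violation event as a function of $t$ alone; after drawing the $O$ samples $\{t_k\}^O$, the scenario theorem applied with $m = np + 1$ gives that $\tau(x,\gamma)$ has $P_t$-measure at least $1-\epsilon_1$ with confidence $1-\beta_1$, provided the sample-size bound $\epsilon_1 \ge \tfrac{2}{O}(\ln\tfrac{1}{\beta_1} + np + 1)$ holds. Lifting one layer, for each $u=\gamma$ the event ``the inner $t$-bound holds at this $x$'' defines $\chi(u)$; a second application of the scenario theorem, now over the $M$ samples $\{x_i\}^M$, yields $P_x(\chi(u)) \ge 1-\epsilon_2$ with confidence $1-\beta_2$. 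A third application over the $N$ samples $\{\gamma_j\}^N$ produces $P_u(\mathcal U) \ge 1-\epsilon_3$ with confidence $1-\beta_3$, which is the conclusion. The user-chosen triples $(\epsilon_i,\beta_i)$ are independent since each one governs only its own marginal, and the common dimension count $m=np+1$ appears in all three bounds because the \emph{same} solution $(C^*,\kappa^*)$ is being evaluated against the three marginal distributions.

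The main obstacle will be justifying that the outer scenario problems at the $x$- and $u$-levels still meet the hypotheses of the scenario theorem once the inner guarantee is only probabilistic (``with confidence $1-\beta_1$'') rather than deterministic. The cleanest way to handle this, following the nested argument of \cite{xue2020pac}, is to observe that $(C^*,\kappa^*)$ is computed \emph{once} from the full product sample $\{x_i\}^M \times \{\gamma_j\}^N \times \{t_k\}^O$ and the three guarantees are three independent read-outs of that single solution against the three marginals; each read-out is itself an instance of the scenario theorem because the feasibility set of the inner problem is a convex-measurable event in the outer variable. A minor loose end is uniqueness of the optimum, but the Tie-break remark immediately following Theorem~1 in the background supplies a canonical selection, so I do not need to argue uniqueness separately. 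Putting the three layers together with the stated sample-complexity bound completes the proof.
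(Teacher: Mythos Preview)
Your proposal is correct and aligns with the paper's approach: the paper's proof is a one-line appeal to Theorem~3 of \cite{xue2020pac}, and your argument---verifying convexity, identifying the $np+1$ decision variables, and applying the basic scenario bound (Theorem~1) three times in a nested fashion over $t$, $x$, and $u$---is exactly how that cited theorem is established. In other words, you have unpacked what the paper leaves packaged in the citation; the only cosmetic discrepancy is that you (correctly) use the separate sample counts $O$, $M$, $N$ at the three levels, whereas the paper's statement writes $M$ uniformly.
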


\begin{proof}
Follows directly from Theorem 3 of \cite{xue2020pac}
\end{proof}

\subsection{Computing Reachable Sets from the Learnt Model}
\label{subsec:pacReach}
Once a probably approximately correct model of $e^{\Lambda t}$ is learned, we compute the reachable set by \emph{bloating} the reachable set obtained from the model by $\kappa^*$.
Therefore, the candidate reachable set would be $Bloat(e^{A_c t} \theta + C^* D_{\Lambda}, \kappa^*)$.
We refer to this reachable set as $\mathsf{modelLearn}(\theta, t)$.
%
%
%
%
%
From the optimization problem in Equation \ref{eq:scOptiPAC}, we get a statistical guarantee that the set $\theta_t$ is $\kappa^*$ close to $\RS(\theta,\Lambda,t)$. So to compute an overapproximation of $\RS(\theta,\Lambda,t)$, we bloat $\theta_t$ by $\kappa^*$ --- $\RS(\theta,\Lambda,t) \subseteq Bloat(\theta_t,\kappa^*)$, with statistical guarantee as in Theorem \ref{thm:pacGuarantee}. 
%
%
We formalize this approach in Algorithm \ref{algo:pac}, and refer to this approach as $\mathsf{modelLearn}$.

\begin{algorithm}
\caption{Reachable set computation using Model Learning}
\label{algo:pac}
\KwInput{$\Lambda$, $\theta$, $t_\delta$}
\KwOutput{$\Theta_t$}
$D_{\Lambda}$ $\leftarrow$ \text{Hyper-box representing the values of the uncertain variables in} $\Lambda$ ; \\
$(C^*,\kappa^*)$=$\texttt{learnModel}\big(Bloat(\theta,\delta_1),Bloat(D_{\Lambda},\delta_2),Bloat(t_{\delta},\delta_3)\big)$ ;\\
$\Theta_t$ = \{\} ; \\
\For{ \text{all} $t$ $\in$ $t_{\delta}$}
{
 $tmp$ = $e^{A_c t} \cdot \texttt{Box}(\theta) + C^* \cdot D_{\Lambda}$ ; \\
 $\Theta_t = \Theta_t \cup \{ Bloat(tmp,\kappa^*) \}$
}
\Return $\Theta_t$ ;
\end{algorithm}

\section{Evaluation}
\label{sec:eval}
We implemented Algorithm \ref{algo:main_algo} and \ref{algo:pac} in a \texttt{Python} based tool. 
%
%
We have used \texttt{Gurobi}\footnote{\url{https://www.gurobi.com/}} \cite{gurobi} for visualizing the reachable sets. For computing distances and to perform subset checking, we have used the tool \texttt{pypolycontain}\footnote{\url{https://github.com/sadraddini/pypolycontain}}. Additionally, we use \texttt{numpy}\footnote{\url{https://numpy.org/}} \cite{harris2020array}, \texttt{scipy}\footnote{\url{https://www.scipy.org/}} \cite{2020SciPy-NMeth} and \texttt{mpmath}\footnote{\url{http://mpmath.org/}} for several other functionalities. If the paper gets accepted, we will open source our tool. 
In the following subsections, we will evaluate our Algorithm \ref{algo:main_algo} and \ref{algo:pac} on several benchmarks, and show that our technique is indeed scalable, even for high dimensional benchmarks. Most of the benchmarks are taken from ARCH workshop\footnote{\url{https://cps-vo.org/group/ARCH/benchmarks}}. All the experiments were performed on a Lenovo ThinkPad Mobile Workstation with i7-8750H CPU with 2.20 GHz and 32GiB memory on Ubuntu 18.04 operating system (64 bit). 
We report the following: (i) performance of Algorithm \ref{algo:main_algo} using all the heuristics, $\mathsf{bloatMean}$, $\mathsf{maxSV}$, $\mathsf{encloseORH}$, $\mathsf{uniformORH}$ and $\mathsf{boxBloat}$; (ii) performance of Algorithm \ref{algo:pac}; (iii) finally, we compare the performance of Algorithm \ref{algo:main_algo} and \ref{algo:pac} mutually and with $\mathsf{Flow^*}$.
A summary of the evaluations on all the benchmarks is given in Table \ref{table:results}.

\subsection{Flight Collision}
\begin{wrapfigure}{r}{0.5\textwidth}
  \begin{center}
    \includegraphics[width=5.5cm,height=4.2cm]{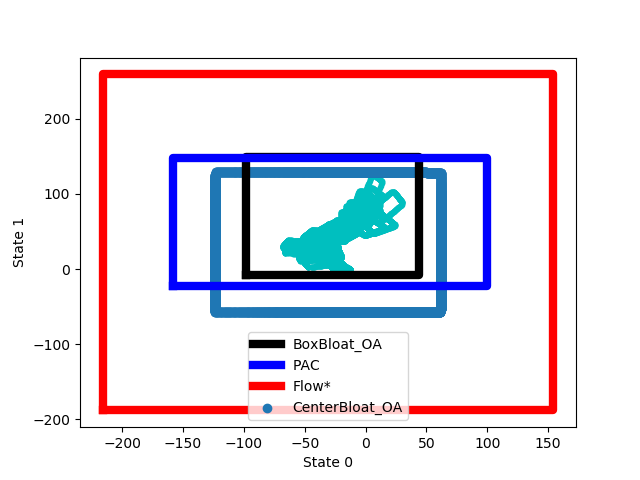}
  \end{center}
  \caption{ Over-approximate reachable sets of Flight Collison model. Dirty Blue: Computed using $\mathsf{bloatMean}$; Black: Computed using $\mathsf{boxBloat}$; Cyan: Random reachable set; Navy Blue: Computed using $\mathsf{modelLearn}$; Red: Computed using $\mathsf{Flow*}$.}
\label{fig:aircraft}
\end{wrapfigure}
\cite{10.1007/978-3-642-05089-3_35} provides an aircraft collision avoidance maneuver model. In air traffic control, collision avoidance maneuvers are used to resolve conflicting paths that arise during free flight. A complicated online trajectory prediction or maneuver planning might not be possible in a such scenarios due to limited available time. Thus an offline analysis, accounting for all possible behavior is a desired solution; making it a good candidate to evaluate our algorithm.
The model of the system is given in Section 1 of \cite{DBLP:conf/emsoft/LalP15}. The state variables in the dynamics are $x=(x_1,x_2)$, location in two dimension and $d=(d_1,d_2)$, velocity in two dimension. The differential equations are further dependent on $\omega$, angular velocity of the flight.
Given the model, and an uncertain parameter $\omega$, we assume $\omega \in [1,3]$ to evaluate our algorithm. And the initial set $\theta$ = [-1,1] $\times$ [-1,1] $\times$ [20,30] $\times$ [20,30].
We tune the \tnm{StatVer} module with $B=9000$ and $c=0.99$, yielding $K=906$ (number of samples required) from Equation \ref{eq:K}, and \emph{type I error} of $ err(B,c)=0.01$ from Equation \ref{eq:typeIerror}. And we tune \tnm{learnModel} module with $N=O=100$ (We don't consider $M$ as we are doing it for a particular time step). In Figure \ref{fig:aircraft}, we show the reachable set for time step 2000, compared with \tnm{Flow*}. The details of the time taken is given in Table \ref{table:results}.

\subsection{Five Vehicle Platoon}
In this subsection we compute reachable sets of a 15 dimensional, 5 vehicle platoon model\footnote{\url{https://ths.rwth-aachen.de/research/projects/hypro/n\_vehicle\_platoon/}}.
This benchmark is a framework of 5 autonomously driven vehicles; where one of the vehicle is a leader, located at the head of the formation, and the rest of the vehicles act as a follower.
The vehicles establish synchronization through communication via network --- exchanging information about their relative positions, relative velocities, accelerations measured with on-board sensors.
Each vehicle in the platoon is described by a 3 dimensional state vector --- relative position, its derivative and acceleration.
The main goal is to avoid collisions inside the platoon.
For a vehicle $i$: $e_i$ is its relative position, $\dot{e_i}$ is the derivative of $e_i$, and $a_i$ is the acceleration.
The state variables are: [$e_1$, $\dot{e_1}$, $a_1$, $e_2$, $\dot{e_2}$, $a_2$, $e_3$, $\dot{e_3}$, $a_3$, $e_4$, $\dot{e_4}$, $a_4$, $e_5$, $\dot{e_5}$, $a_5$]. 
The dynamics is provided in the url.
The model is fixed \emph{i.e.} not directly dependent on some parameters. To capture modelling error, we introduce perturbation of $\pm$ 2\% in the following cells: [3,4], [4,5]. And let the initial set $\theta$ = $[-1,1]^{15}$.
We tune the \tnm{StatVer} module with $B=9000$ and $c=0.99$, yielding $K=906$ (number of samples required) from Equation \ref{eq:K}, and \emph{type I error} of $ err(B,c)=0.01$ from Equation \ref{eq:typeIerror}. And we tune \tnm{learnModel} module with $N=O=300$ (We don't consider $M$ as we are doing it for a particular time step). In Figure \ref{fig:five_space} (Left), we show the reachable set for time step 2000, compared with \tnm{Flow*}. The details of the time taken is given in Table \ref{table:results}.

\subsection{Spacecraft Rendezvous}
\cite{ARCH17:Verifying_safety_of_an} provides a linear model of autonomous maneuver of a spacecraft navigating to and approaching other spacecraft.
%
%
The dynamics of the two spacecraft in orbit—the target and the chaser—are derived from Kepler’s laws. The two spacecrafts are assumed to be in same the orbital plane. The target is assumed to move on a circular orbit. The linearized model is given in Section 3.2 of \cite{ARCH18:Lane_change_maneuver_for}. The equations depend on parameters like $n=\sqrt{\frac{\mu}{r}}$, where $\mu=3.986 \times 10^{14}m^3/s^2$ and $r=42164km$ and $m_c=500kg$, the mass of the spacecraft.
For our experiments, we assume $m_c \in [475,525]$, as a result of perturbation due to measurement and $n \in [97200, 97250]$. And let the initial $\theta$ = [-1,1] $\times$ [-1,1] $\times$ [0,0] $\times$ [0,0] $\times$ [1,1] $\times$ [1,1]. 
We tune the \tnm{StatVer} module with $B=9000$ and $c=0.99$, yielding $K=906$ (number of samples required) from Equation \ref{eq:K}, and \emph{type I error} of $ err(B,c)=0.01$ from Equation \ref{eq:typeIerror}. And we tune \tnm{learnModel} module with $N=O=200$ (We don't consider $M$ as we are doing it for a particular time step). In Figure \ref{fig:five_space} (Right), we show the reachable set for time step 1901, compared with \tnm{Flow*}. \textit{Note that Flow* stopped after 1901 time steps, as it could not compute the flowpipes anymore. Our tool showed no such limitations}. The details of the time taken is given in Table \ref{table:results}.

\begin{figure}[ht]
\centering
\includegraphics[width=12.5cm,height=4cm]{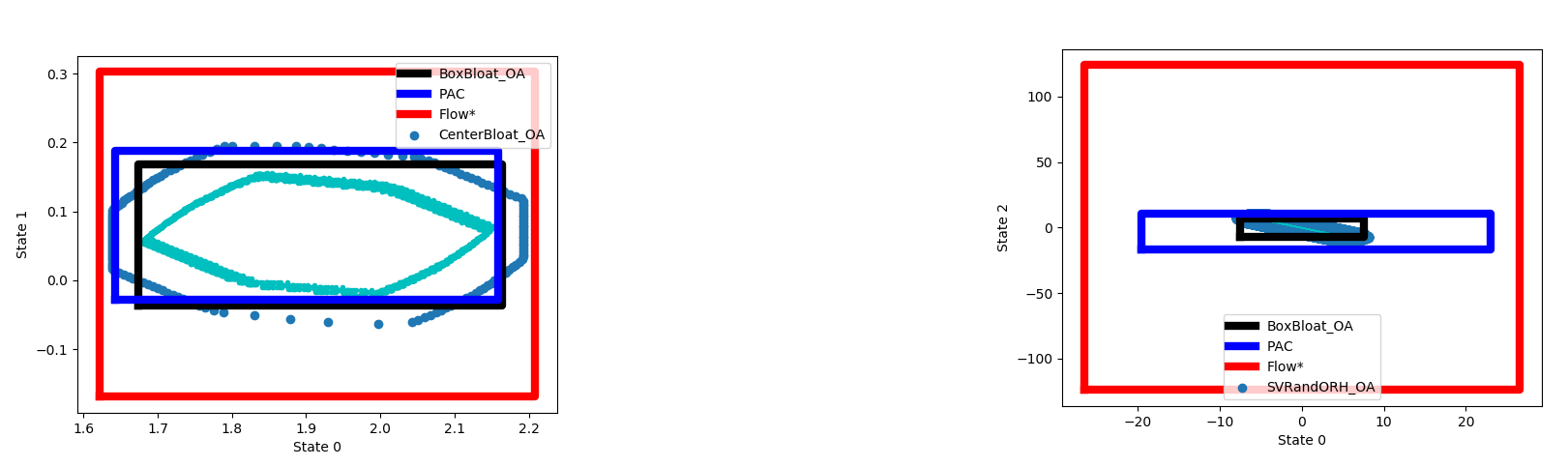}
\caption{ \textbf{Left}: Over-approximate reachable sets of 5 Vehicle Platoon model. Dirty Blue: Computed using $\mathsf{bloatMean}$; Black: Computed using $\mathsf{boxBloat}$; Cyan: Random reachable set; Navy Blue: Computed using $\mathsf{modelLearn}$; Red: Computed using $\mathsf{Flow*}$. \textbf{Right}: Over-approximate reachable sets of Spacecraft Rendezvous model. Dirty Blue: Computed using $\mathsf{maxSV}$; Black: Computed using $\mathsf{boxBloat}$; Cyan: Random reachable set; Navy Blue: Computed using $\mathsf{modelLearn}$; Red: Computed using $\mathsf{Flow*}$.}
\label{fig:five_space}
\end{figure}

\subsection{More Benchmarks}
We evaluated our algorithms on several other benchmarks ---
(i) An \emph{Anaesthesia Delivery Model} \cite{ARCH15:Benchmark_Problem_PK_PD_Model}, a 5 dimensional model;
(ii) a 16 dimensional \emph{Autonomous Quadrotors Model} \cite{Kaynama2014BenchmarkFE};
(iii) a 5 dimensional \emph{Motor Transmission Drive System} \cite{ARCH15:Motor_Transmission_Drive_System_Benchmark};
(iv) A 10 dimensional model of a \textit{Self-balancing Two-wheeled Robot} \cite{ARCH15:Benchmark_Reachability_on_model};
(v) a 4 dimensional model of an \textit{Adaptive Cruise Control (ACC)} \cite{7349170}.
Due to lack of space, the plots for these benchmarks are not provided. The summary of all the evaluation is provided in Table \ref{table:results}.

\begin{table}
\caption{Evaluation on Benchmarks. \textbf{Dim}: Number of dimensions of the benchmark; \textbf{Hypothesis Testing}: Time taken (in seconds) by various hypothesis testing based methods (The sub-columns are as follows. \textbf{Mean}: $\mathsf{bloatMean}$, \textbf{SV}: $\mathsf{maxSV}$, \textbf{ORH}: $\mathsf{encloseORH}$, \textbf{Uniform}: $\mathsf{uniformORH}$,
\textbf{Box}: $\mathsf{boxBloat}$.); \textbf{Learn}: Time taken (in seconds) by $\mathsf{modelLearn}$ method; \textbf{Flow*}: Time taken (in seconds) by \texttt{Flow*}.}\label{tab1}

\begin{tabular}{|l|c|c|c|c|c|c|c|c|}
\hline
\cline{3-7}
\multicolumn{1}{|c|}{}& \multicolumn{1}{c|}{} & \multicolumn{5}{c|}{\textbf{Hypothesis Testing}} &
\multicolumn{1}{c|}{} &
\multicolumn{1}{c|}{}
\\
\hline
\textbf{Benchmark} & \textbf{Dim} & {\textbf{Mean}} & {\textbf{SV}} & {\textbf{ORH}} & {\textbf{Uniform}} & {\textbf{Box}} & \textbf{Learn} & \textbf{Flow*}\\
\hline
Adaptive Cruise Control & 4 & 9.28 & 14.15 & 11.16 & 11.58 & \textbf{1.57} & 3.97 & 41.35\\
Flight Collision & 4 & 12.35 & 7.28 & 10.36 & 7.59 & \textbf{2.045} & 4.077 & 922.48\\
Anaesthesia & 5 & 7.39 & 12.44 & 31.53 & 17.89 & \textbf{1.78} & 8.13 & 24.73\\
Motor-Transmission & 5 & 17.95 & 12.73 & 13.02 & 12.78 & \textbf{1.72} & 9.18 & 19.51\\
Spacecraft Rendezvous & 6 & 25.18 & 19.1 & 19.477 & 18.98 & \textbf{3.05} & 24.57 & 2176.54\\
2-wheeled Robot & 10 &  112.68 & - & 119.52 & 109.88 & \textbf{7.604} & 90.51 & 50.48\\
5 Vehicle Platoon & 15 & 250.1 & - & 231.41 & 214.38 & \textbf{11.455} & 147.9 & 1745.47\\
Quadrotor & 16 &  330.1 & - & 392.37 & 350.84 & \textbf{5.74} & 152.73 & 39.62\\
\hline
\end{tabular}
\label{table:results}
\centering
\end{table}


As observed in Table~\ref{table:results}, the time taken for computing candidate reachable sets with statistical guarantees is much less than compared to the nonlinear reachability tools like Flow*.
Notice that the time taken by $\mathsf{boxBloat}$ is always the least followed by $\mathsf{modelLearn}$ in almost all of these examples.
Also, while $\mathsf{encloseORH}$ would give conservative overapproximation in some instances, the computation required to compute it is two orders of magnitude compared to $\mathsf{boxBloat}$ or $\mathsf{modelLearn}$.

\section{Conclusions}
In this paper, we presented two techniques to compute reachable set of uncertain linear systems that provide probabilistic guarantees.
The first technique involves performing statistical hypothesis testing whereas the second involves model learning.
One of the heuristic proposed in computing the candidate for statistical hypothesis testing provides us with a conservative overarpproximation of the reachable set when the uncertain system satisfies certain conditions.
Evaluation of our techniques on various benchmarks shows that our approach can compute reachable set artifacts for high dimensional systems relatively quickly.
While verification of CPS has received significant attention, the work on quantifying the effects of uncertainty on the safety specification has not been studied extensively.

In this paper, we attempted to bridge the gap between computational efficiency and statistical guarantees for computing the reachable set artifacts.
This would help us in comparing the robustness of the safety specification with respect to model perturbations.
In future, we intend to apply this technique for synthesizing controllers that satisfy the safety specification even under model perturbations.

\bibliographystyle{splncs04}
\bibliography{main}

\appendix

\section{Generator: Computing Candidate Reachable Set}
\label{appx:generator}

\subsection{Bloating the Reachable Set of Mean Dynamics}
\label{appxsubsec:centerBased}
Following is an example illustrating the technique in Section \ref{subsec:centerBased}:

\begin{example}
\label{ex:reachmean}
Consider the uncertain linear system $\dot{x} = \Lambda x$ where 
$\Lambda = \\ \begin{bmatrix}
    1  & [-2,-1.8] \\
    0   & -2 
\end{bmatrix}$. For this system, the \emph{mean} dynamics would be 
$\dot{x} = \begin{bmatrix}
    1  & -1.9 \\
    0   & -2 
\end{bmatrix} x$.
To compute the candidate reachable set at time $t$, we first compute the reachable set of the mean dynamics at time $t$, denoted as $\RS_{mean}$. 
We then generate several samples dynamics from $\Lambda$ according to the distribution $\mu$ (say $N$ of them denoted as $A_1, A_2, \ldots, A_N$ ) and compute the reachable set for each of them ($\RS_1, \RS_2, \ldots, \RS_N$ respectively).
We compute each of these reachable sets using the generalized star representation as described in Definition~\ref{def:reachSetStar}
The Hausdorff distance between the reachable sets is computed using linear encoding described in~\cite{sadraddini2019linear}.
We then bloat the mean reachable set $\RS_{mean}$ by $\epsilon$ added to the maximum of the Hausdorff distance.
\end{example}

\subsection{Sampling Based on Singular Values}
\label{appxsubsec:maxSVBased}

Following is an example illustrating the technique in Section \ref{subsec:maxSVBased}:

\begin{example}
\label{ex:reachSV}
Building on top of Example~\ref{ex:reachmean}, we compute the matrix with the maximum singular value in the interval matrix $\begin{bmatrix}
    1  & [-2,-1.8] \\
    0   & -2 
\end{bmatrix}$
and perform the rest of the steps similar to that is Example~\ref{ex:reachmean}.
\end{example}

\subsection{Convex Overapproximation Using Oriented Rectangular Hulls}
\label{subsec:orhBas
ed}

Figure \ref{fig:exORH} shows an \emph{Oriented Rectangular Hull} (green) computed around a set of 5 reachable sets (cyan) for one of the benchmarks considered in this paper.

\begin{figure}[ht]
\centering
\includegraphics[width=8.2cm,height=6cm]{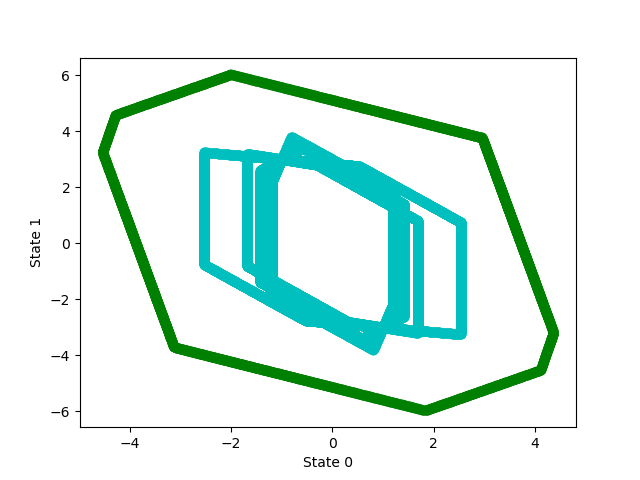}
\caption{Computed Oriented Rectangular Hull in green, around the reachable sets in blue}
\label{fig:exORH}
\end{figure}

\begin{example}
\label{ex:reachORH}

Building on Example~\ref{ex:reachmean}, for computing the oriented rectangular hull candidate reachable set, we sample $N$ dynamics from 
$\begin{bmatrix}
    1  & [-2,-1.8] \\
    0   & -2 
\end{bmatrix}$
at random, compute the template directions from oriented rectangular hull, and compute the template based overapproximation of each of the sample reachable sets computed.
\end{example}

\subsection{Structure Guided Reachable Set Computation}
\label{appxsubsec:sepBased}
\subsubsection{Conservative Overapproximation for Special Conditions}
\label{appxsubsubsec:conservative}
In this section, we prove that if all the variables in $\Lambda$ are LME preserving, then $\texttt{uniformORH}(\Lambda, \theta, t)$ is an overapproximation of the reachable set of all the dynamics in $\Lambda$ for initial set $\theta$.
This proof has two parts.
First, we show that if all variables are LME preserving, then the matrix exponential is also an LME.
Second, if the matrix exponential is an LME, then the union of matrix exponential of all samples in $\Lambda$ can be obtained by computing convex hull of the matrix exponential of vertices of $\Lambda$.
As a consequence, the convex hull of the reachable set of the vertices of $\Lambda$ contains the union of reachable set of all samples in $\Lambda$.

\begin{definition}[From~\cite{ghosh2019robust}]
\label{def:LMEclosed}
Given a linear matrix expression (LME) $M_{\Lambda} = M_0 + M_1 y_1 + \ldots + M_k y_k$, if
\begin{enumerate}
\item $supp(M_0) \times supp(M_0) \leq supp(M_0)$.
\item $\forall 1 \leq j \leq k, supp(M_0) \times supp(M_j) \leq supp(M_j)$ and \\ $ supp(M_j) \times supp(M_0) \leq supp(M_j)$.
\item $\forall 1 \leq j,l \leq k, supp(M_j) \times supp(M_l) = \textbf{0}$.
\end{enumerate}
then $M_{\Lambda}^n$ is also an LME for all $n \geq 1$. We call such LMEs as closed under multiplication.
\end{definition}

\begin{lemma}
\label{lem:matrixexpo}
If the uncertain linear system satisfies the conditions in Definition~\ref{def:LMEclosed}, then $e^{M_{\Lambda}}$ is also an LME.
Here, the matrix exponential $e^{M} = I + \frac{M}{1!} + \frac{M^2}{2!} + \frac{M^3}{3!} + \ldots$.
\end{lemma}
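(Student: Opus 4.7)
The plan is to leverage the fact (established in Definition~\ref{def:LMEclosed} and the accompanying result from~\cite{ghosh2019robust}) that under the stated support conditions, every power $M_{\Lambda}^n$ is itself an LME, and then to control the infinite Taylor series termwise. The crucial structural observation is condition (3), which forces $M_j M_l = 0$ for all $j, l \geq 1$: since $\mathrm{supp}(M_j) \times \mathrm{supp}(M_l) = \mathbf{0}$, no entry of the product $M_j M_l$ can be nonzero. As a consequence, when one expands $M_\Lambda^n = (M_0 + \sum_{i=1}^k M_i y_i)^n$, every monomial containing two or more factors from $\{M_1,\ldots,M_k\}$ vanishes. So one gets the explicit formula
\begin{equation*}
M_\Lambda^n \;=\; M_0^n \;+\; \sum_{i=1}^{k} y_i \sum_{j=0}^{n-1} M_0^{\,j}\, M_i\, M_0^{\,n-1-j},
\end{equation*}
which is manifestly linear in $y_1,\ldots,y_k$. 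Write this as $M_\Lambda^n = N_0^{(n)} + \sum_{i=1}^k N_i^{(n)} y_i$ with $N_0^{(n)} = M_0^n$ and $N_i^{(n)} = \sum_{j=0}^{n-1} M_0^{\,j} M_i M_0^{\,n-1-j}$.

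Next I would substitute this into the exponential series:
\begin{equation*}
e^{M_\Lambda} \;=\; I + \sum_{n \geq 1} \frac{1}{n!}\,M_\Lambda^n \;=\; \Bigl(I + \sum_{n \geq 1} \frac{N_0^{(n)}}{n!}\Bigr) + \sum_{i=1}^{k} y_i\, \Bigl(\sum_{n \geq 1} \frac{N_i^{(n)}}{n!}\Bigr),
\end{equation*}
and then define $P_0 := I + \sum_{n \geq 1} N_0^{(n)}/n!$ and $P_i := \sum_{n \geq 1} N_i^{(n)}/n!$ for $1 \leq i \leq k$. If these series converge to fixed matrices, then $e^{M_\Lambda} = P_0 + \sum_i P_i y_i$ is by definition an LME, and we are done.

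The technical step, and the main obstacle, is justifying both the convergence of each $P_i$ and the rearrangement from the outer series in $n$ to the grouping by powers of $y_i$. For $P_0$ convergence is immediate since the partial sums are just those of $e^{M_0}$. For $P_i$ with $i \geq 1$, I would bound $\|N_i^{(n)}\| \leq n\, \|M_i\|\,\|M_0\|^{n-1}$ using submultiplicativity of the operator norm, which gives
\begin{equation*}
\sum_{n \geq 1} \frac{\|N_i^{(n)}\|}{n!} \;\leq\; \|M_i\|\sum_{n \geq 1} \frac{n\,\|M_0\|^{n-1}}{n!} \;=\; \|M_i\|\, e^{\|M_0\|},
\end{equation*}
so each series converges absolutely entry-by-entry. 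Absolute convergence of the double series then justifies rearranging the Taylor series of $e^{M_\Lambda}$ into the LME form. Finally, since the $P_i$ are constant matrices (independent of $y_1,\ldots,y_k$), $e^{M_\Lambda}$ fits the LME template $P_0 + \sum_i P_i y_i$, completing the proof. The only delicate bookkeeping is verifying the norm bound on $N_i^{(n)}$ and checking that no hidden $y_i$-dependence slips into the $P_i$; both follow cleanly from the vanishing cross-products $M_j M_l = 0$.
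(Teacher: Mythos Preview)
Your proof is correct and follows the same essential approach as the paper: observe that each power $M_\Lambda^n$ is an LME (the paper cites this from~\cite{ghosh2019robust}, while you re-derive the explicit formula), then sum the series termwise. Your treatment is in fact more careful than the paper's one-line proof, since you explicitly justify the convergence of the coefficient series $P_i$ and the rearrangement of the double sum, which the paper glosses over with ``LMEs are closed under addition.''
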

\begin{proof}
Follows from the fact that  $M_{\Lambda}^{n}$ is also an LME for all $n \geq 1$ and LMEs are closed under addition operation.
\end{proof}

\begin{lemma}
\label{lem:closure}
Given an uncertain linear system $\Lambda$ with closed LME, and $\overline{\gamma_1}, \overline{\gamma_2}$ be two valuations of variables in $D_{\Lambda}$, we have
$e^{\Lambda_{\beta \overline{\gamma_1} + (1-\beta) \overline{\gamma_2}}} = \beta e^{\Lambda_{\overline{\gamma_1}}} + (1-\beta) e^{\Lambda_{\overline{\gamma_2}}}.$
\end{lemma}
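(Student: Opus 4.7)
The plan is to deduce this lemma directly from Lemma \ref{lem:matrixexpo}, which tells us that under the closed LME conditions, $e^{M_{\Lambda}}$ is itself a linear matrix expression. That means there exist constant matrices $N_0, N_1, \ldots, N_k$ such that
\[
e^{M_{\Lambda}} \;=\; N_0 + \sum_{j=1}^k N_j\, y_j.
\]
The first step is to justify that substitution of numerical values for the symbolic variables commutes with the matrix exponential, i.e.\ that $e^{\Lambda_{\overline{\gamma}}}$ equals the evaluation of the formal LME $e^{M_{\Lambda}}$ at $\overline{\gamma}$. This follows because the formal power series $\sum_{n \geq 0} M_{\Lambda}^n / n!$ converges entrywise once real values are plugged in (each polynomial entry becomes bounded), and by Lemma \ref{lem:matrixexpo} it collapses to an affine expression in the $y_j$'s. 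Hence $e^{\Lambda_{\overline{\gamma}}} = N_0 + \sum_j N_j\, \gamma_j$.

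Once this affine representation is in hand, the lemma becomes an immediate consequence of the fact that affine maps preserve convex combinations. Writing $\overline{\gamma} = \beta\,\overline{\gamma_1} + (1-\beta)\,\overline{\gamma_2}$, I would compute
\[
e^{\Lambda_{\overline{\gamma}}} \;=\; N_0 + \sum_j N_j\bigl(\beta\,\gamma_{1,j} + (1-\beta)\,\gamma_{2,j}\bigr),
\]
and expand the right-hand side as
\[
\beta\!\left(N_0 + \sum_j N_j\,\gamma_{1,j}\right) + (1-\beta)\!\left(N_0 + \sum_j N_j\,\gamma_{2,j}\right) \;=\; \beta\, e^{\Lambda_{\overline{\gamma_1}}} + (1-\beta)\, e^{\Lambda_{\overline{\gamma_2}}},
\]
using $\beta + (1-\beta) = 1$ to reconstitute the single copy of $N_0$. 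This yields the desired identity.

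The only real subtlety, and thus the main obstacle, is justifying Step 1 rigorously: showing that the formal LME representation of $e^{M_{\Lambda}}$ provided by Lemma \ref{lem:matrixexpo} indeed coincides, after numerical substitution, with the ordinary matrix exponential $e^{\Lambda_{\overline{\gamma}}}$. I would handle this by noting that Lemma \ref{lem:matrixexpo} is really the statement that for every $n$, $M_{\Lambda}^n$ evaluated at $\overline{\gamma}$ equals a fixed affine combination of $\overline{\gamma}$ (with no higher-degree terms, because the support conditions kill all quadratic and cross products). Swapping the finite sum over $j$ with the infinite Taylor sum is then legitimate by absolute convergence of the matrix-exponential series on any bounded set of inputs, after which the two sides agree term by term. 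Everything else is purely algebraic bookkeeping.
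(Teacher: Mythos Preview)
Your proposal is correct and follows essentially the same route as the paper: invoke Lemma~\ref{lem:matrixexpo} to write $e^{M_\Lambda}$ as an LME $Q_0 + \sum_j Q_j y_j$, evaluate at $\overline{\gamma_1}$, $\overline{\gamma_2}$, and their convex combination, and use affinity. The paper simply states the two evaluated expressions and declares the conclusion ``trivial to observe,'' whereas you additionally spell out the $\beta + (1-\beta)=1$ bookkeeping and the convergence argument justifying that numerical substitution commutes with the exponential series; this extra care is reasonable but not something the paper addresses.
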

\begin{proof}
From Lemma~\ref{lem:matrixexpo}, we know that $e^{\Lambda}$ is an LME, say $Q_0 + Q_1 y_1 + \ldots Q_k y_k$.
Evaluating this LME over valuation $\overline{\gamma_1}$ would yield 
$$
e^{\Lambda_{\overline{\gamma_1}}} = Q_0 + Q_1 \overline{\gamma_1}[1] + \ldots Q_k \overline{\gamma_1}[k].
$$
and evaluating the LME over $\overline{\gamma_2}$ yields
$$
e^{\Lambda_{\overline{\gamma_2}}} = Q_0 + Q_1 \overline{\gamma_2}[1] + \ldots Q_k \overline{\gamma_2}[k].
$$
Observing these expressions, it is trivial to observe that the lemma holds.
\end{proof}

\begin{theorem}
\label{thm:first}
Given an uncertain linear system $\Lambda$ that is an LME closed under multiplication, and let $A_1, A_2, \ldots, A_M$ be the sample dynamics that are a result of assigning variables to the vertices of $D_{\Lambda}$.

Given an initial set $\theta$, let $\RS_i$ denote $\RS_{\theta}(A_i, t)$ for $1 \leq i \leq M$. Then we have 
$$
\mathsf{ConvexHull}(\RS_1, \ldots, \RS_M) \supseteq \bigcup_{A \in \Lambda} \RS_{\theta}(A, t).
$$
\end{theorem}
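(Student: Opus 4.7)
The plan is to reduce the set inclusion to a statement about matrix exponentials using the closure property already established in Lemma~\ref{lem:closure}, and then promote that pointwise statement on exponentials into a pointwise statement on reachable states.

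First I would observe that if $\Lambda$ is a closed LME in the sense of Definition~\ref{def:LMEclosed}, then so is $\Lambda t$ for any fixed $t \in \reals$ (scaling each coefficient matrix $M_j$ by $t$ preserves all support conditions, which are purely combinatorial). Hence Lemma~\ref{lem:matrixexpo} applied to $\Lambda t$ gives that $e^{\Lambda t}$ is itself an LME in the variables $y_1,\ldots,y_k$, and Lemma~\ref{lem:closure} applied to $\Lambda t$ gives the affine identity $e^{\Lambda_{\beta \overline{\gamma_1} + (1-\beta) \overline{\gamma_2}} t} = \beta e^{\Lambda_{\overline{\gamma_1}} t} + (1-\beta) e^{\Lambda_{\overline{\gamma_2}} t}$ for any two valuations and any $\beta \in [0,1]$.

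Next I would extend this binary affine identity to an $M$-ary convex combination by a straightforward induction on the number of terms. Concretely, for any valuation $\overline{\gamma} \in D_{\Lambda}$, since $D_{\Lambda}$ is a bounded polytope with vertices $\overline{\gamma_1},\ldots,\overline{\gamma_M}$, we may write $\overline{\gamma} = \sum_{i=1}^{M} \lambda_i \overline{\gamma_i}$ with $\lambda_i \geq 0$ and $\sum_i \lambda_i = 1$. Pairing off two weights at a time and applying Lemma~\ref{lem:closure} repeatedly yields
\begin{equation*}
e^{\Lambda_{\overline{\gamma}} t} \;=\; \sum_{i=1}^{M} \lambda_i \, e^{\Lambda_{\overline{\gamma_i}} t} \;=\; \sum_{i=1}^{M} \lambda_i \, e^{A_i t}.
\end{equation*}

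Finally I would lift this to reachable sets. Pick any $A \in \Lambda$, say $A = \Lambda_{\overline{\gamma}}$, and any $x_0 \in \theta$. Applying the displayed identity to $x_0$ gives
\begin{equation*}
\xi_A(x_0, t) \;=\; e^{A t} x_0 \;=\; \sum_{i=1}^{M} \lambda_i \, e^{A_i t} x_0 \;=\; \sum_{i=1}^{M} \lambda_i \, \xi_{A_i}(x_0, t).
\end{equation*}
Each term $\xi_{A_i}(x_0, t)$ belongs to $\RS_i = \RS_\theta(A_i, t)$ because $x_0 \in \theta$, so $\xi_A(x_0, t)$ is a convex combination of points drawn one from each $\RS_i$, hence lies in $\mathsf{ConvexHull}(\RS_1,\ldots,\RS_M)$. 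Ranging over all $x_0 \in \theta$ and all $A \in \Lambda$ yields the desired inclusion.

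The only real obstacle is the induction step that promotes Lemma~\ref{lem:closure} from pairs to arbitrary convex combinations; this is routine but requires being careful that the intermediate pairwise convex combinations still lie in $D_{\Lambda}$ so that Lemma~\ref{lem:closure} remains applicable, which follows from convexity of $D_{\Lambda}$ (it is a bounded polytope). Everything else is bookkeeping.
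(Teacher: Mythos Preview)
Your proposal is correct and follows essentially the same route as the paper's proof: write an arbitrary $A\in\Lambda$ as coming from a convex combination of vertex valuations, use Lemma~\ref{lem:closure} to turn that into a convex combination of the $e^{A_i t}$, and then multiply by $x_0$. You are in fact more careful than the paper in two places---explicitly noting that $\Lambda t$ inherits the closed-LME property so Lemma~\ref{lem:closure} applies with the $t$ factor, and spelling out the induction from the binary affine identity to an $M$-ary one---whereas the paper invokes the $M$-ary version with the $t$ factor directly.
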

\begin{proof}
consider an state $x_0 \in \theta$ and $A \in \Lambda$. The state reached by $x_0$ after time $t$ with the dynamics $\dot{x} = Ax$ is $e^{At}x_0$.

Now, since $A \in \Lambda$ and $A_1, \ldots, A_M$ are the vertices of $\Lambda$, from Lemma~\ref{lem:closure}, we have $\exists \lambda_1, \ldots, \lambda_M$ such that 
$0 \leq \lambda_i \leq 1$ and $\Sigma_{i=1}^{M} \lambda_i = 1$ such that 
$$
e^{At} = \lambda_1 e^{A_1 t} + \ldots + \lambda_M e^{A_M t}.
$$
Therefore, 
$$
e^{At}x_0 = \lambda_1 e^{A_1 t}x_0 + \ldots + \lambda_M e^{A_M t}x_0.
$$
Since $\lambda_1 e^{A_1 t}x_0 + \ldots + \lambda_M e^{A_M t}x_0 \in \mathsf{ConvexHull}(\RS_1, \ldots, \RS_M)$, the proof is complete.
\end{proof}

\begin{example}
Building on Example~\ref{ex:reachmean}, consider the uncertain linear system  \\
$\begin{bmatrix}
    1  & [-2,-1.8] \\
    0   & -2 
\end{bmatrix}$.
Notice that this uncertain system satisfies the conditions for closure of LME under multiplication. 
Therefore, we compute the reachable set of two dynamics 
$\begin{bmatrix}
    1  & -2 \\
    0   & -2 
\end{bmatrix}$ and 
$\begin{bmatrix}
    1  & -1.8 \\
    0   & -2 
\end{bmatrix}$
and the convex hull of these two reachable sets contains the reachable set of the uncertain system.
\end{example}

\end{document}